\newcommand{\aref}[2]{\hyperref[#2]{#1~\ref{#2}}}
\begin{document}

\setlength{\pdfpageheight}{\paperheight}
\setlength{\pdfpagewidth}{\paperwidth}

\conferenceinfo{CONF 'yy}{Month d--d, 20yy, City, ST, Country} 
\copyrightyear{20yy} 
\copyrightdata{978-1-nnnn-nnnn-n/yy/mm} 
\doi{nnnnnnn.nnnnnnn}

\title{Reachability in Two-Dimensional Unary \\
       Vector Addition Systems with States is NL-Complete
\thanks{Supported by the EPSRC, grant EP/M011801/1.}}

\authorinfo{Matthias Englert \and Ranko Lazi\'c \and Patrick Totzke}
           {DIMAP, Department of Computer Science, University of Warwick}
           {\url{warwick.ac.uk/dimap/}}

\maketitle

\begin{abstract}
Blondin et al.\ showed at LICS 2015 that 
two-dimensional vector addition systems with states
have reachability witnesses of length 
exponential in the number of states and 
polynomial in the norm of vectors.
The resulting guess-and-verify algorithm is optimal (\PSPACE),
but only if the input vectors are given in binary.
We answer positively the main question left open by their work,
namely establish that reachability witnesses 
of pseudo-polynomial length always exist.  
Hence, when the input vectors are given in unary, 
the improved guess-and-verify algorithm requires only logarithmic space.

\end{abstract}

\section{Introduction} 
\label{sec:introduction}
To quote from Boja\'nczyk's preface to \citeauthor{Schmitz16}'s
very recent survey \citep{Schmitz16},
the reachability problem for vector addition systems with states (VASS)
`is one of the most celebrated decidable problems 
 in theoretical computer science'.
The interest, though, is not only theoretical:
\citeauthor{Schmitz16} has devoted a long section to
`only a small sample of the problems interreducible' 
with the reachability problem, and the domains of the problems
he identifies range over formal languages, logic,
concurrent systems and process calculi.

For informative introductions to the fascinating history of 
the VASS reachability problem, that stretches from the 1970s,
we refer the reader to \citet{Schmitz16} and \citet{BFGHM2015}.
In a nutshell, the state of the art when it comes to the problem's complexity
hinges on two recent and one old discovery:
\begin{itemize}
\item
Remarkably, \citeauthor{Lipton76}'s \EXPSPACE\ lower bound \citep{Lipton76}
is still unbeaten.
\item
The best known upper bound, by \citet{LS15}, is \emph{cubic Ackermann},
a non-primitive recursive complexity class.
\item
The largest fixed dimension for which 
an interesting upper bound is known is $2$: 
\citet{BFGHM2015} have established that 
the $2$-VASS reachability problem is in \PSPACE.
\end{itemize}

Our contribution is to resolve the main open question that arises
from the latter work, and is highlighted by \citet{Schmitz16}.  
Namely, the headline result of \citeauthor{BFGHM2015}
is that $2$-VASS reachability is \PSPACE-complete, 
but that is provided the input to the problem is succinct, 
i.e.\ the integers that specify the action, source and target vectors 
are given in binary.  When the encoding is unary,
a considerable complexity gap has remained,
between \NL\ hardness and \NP\ membership,
and that is what we close.

We believe this is noteworthy at least for the following reasons:
\begin{itemize}
\item
To make progress on the challenge of the complexity of the general problem,
it is natural to fix some parameters, especially the dimension.
Bypassing the border between dimensions $2$ and $3$,
which is where there is a jump beyond semi-linearity,
seems to be very difficult with current techniques \citep[cf.][]{BFGHM2015}.
For dimension $1$, the complexities were determined as
\NP-complete in the binary case \citep{HKOW09} and
\NL-complete in the unary case \cite{VP1975}.
\item
The unary encoding is used frequently enough,
e.g.\ the classical modeling of concurrent systems by VASS \citep{GS92} 
produces integers that are proportional to 
how many processes may interact in a single transition.
Also, VASS given in unary can be translated without blow-up to
\emph{unary VASS} whose actions contain only $-1$, $0$ and $1$,
and \citeauthor{Lipton76}'s lower bound holds already for such VASS.
\item
Our main result, that reachability for $2$-VASS in unary is in \NL,
implies the \PSPACE\ membership of the succinct variant.
Moreover, and maybe most interestingly, we obtain the \NL\ membership
by proving that $2$-VASS have reachability witnesses of
\emph{pseudo-polynomial} length, i.e.\ polynomial in the number of states
and the maximum absolute value of any action, source or target integer.
To our knowledge, this is the first time that the complexity of
an interesting restriction of the reachability problem has broken
`the size of the reachability set barrier'.
Namely, it is well-known that general VASS may have reachability sets
which are finite but Ackermannianly large \citep{cardoza76}, 
and although some researchers conjecture that the reachability problem is 
primitive recursive or even of much smaller complexity, 
the Ackermann barrier remains.
When the dimension is $2$, it is not difficult to construct 
examples with exponentially large reachability sets 
(by employing weak doubling a number of times proportional to 
the number of states---this uses integers only up to absolute value $2$),
but we prove that polynomial reachability witnesses always exist.
\item
The technique we have developed seems novel, is surprisingly involved,
and can be seen as a kind of extension of the classical $1$-dimensional 
hill cutting \citep[cf.\ e.g.][]{VP1975} to dimension $2$.
\end{itemize}

After a couple of preparatory sections, we present the main proof in 
\aref{Section}{sec:lps-short}, split into several stages.  
There, using the flattenings obtained by \citet{BFGHM2015}, we are able 
to concentrate on obtaining short reachability witnesses for $2$-VASS 
that are \emph{\LPS s}, i.e.\ without nested cycles.
We then establish consequences for arbitrary $2$-VASS in 
\aref{Section}{sec:application_to_2_vass}.

\section{On Our Marks} 
\label{sec:preliminaries}
Here we recall, fix or introduce the basic notions, notations and problems
we require.

\paragraph{Sets of Numbers.}
To restrict a set of numbers, we may write a condition in subscript,
e.g.\ $\N_{{\ge} b}$ denotes the set of all non-negative integers
that are at least~$b$.

\paragraph{Lengths, Sizes and Norms.}
We denote the length or size by single bars, 
e.g.\ the length of a word $w$ is written~$\len{w}$.

To denote the infinity norm, we employ double bars.
Thus, for a vector $\vec{v}$, $\norm{\vec{v}}$ equals
the maximum absolute value of any entry $\vec{v}_i$.
Also, for a finite set $\vec{A}$ of vectors, $\norm{\vec{A}}$ is 
the maximum of the infinity norms of its elements.

\paragraph{Rational Cones.}
We consider the cone spanned by a subset $\vec{C}$
of a $d$-dimensional rational space $\Q^d$ to be
the closure of $\vec{C}$ under addition and 
under multiplication by positive rationals.

Note that the cone of $\vec{C}$ contains the zero vector only if
it contains a line
or one of the vectors in $\vec{C}$ is zero.

\paragraph{Paths and Admissibility.}
For a finite set $\vec{A} \subseteq \Z^d$, we have that 
vectors $\vec{a} \in \vec{A}$,
finite words $\pi \in \vec{A}^*$ and
languages $L \subseteq \vec{A}^*$
induce the following reachability relations on 
the $d$-dimensional non-negative integer space~$\N^d$:
\begin{itemize}
\item
$\vec{b}\step{a}\vec{b'}$ iff $\vec{b}+\vec{a}=\vec{b'}$,
\item
${\step{\pi}} \eqdef
 {\step{\pi(1)}} ; \cdots ; {\step{\pi(\len{\pi})}}$, and
\item
${\step{L}}\eqdef\bigcup_{\pi\in L}{\step{\pi}}$.
\end{itemize}

We often refer to a word $\pi\in\sys{A}^*$ as a \emph{path},
and call the sum 
$\effect{\pi}\eqdef \pi(1) + \cdots + \pi(|\pi|)$
the \emph{effect} of $\pi$.
From a \emph{source} $\vec{s} \in \Z^d$,
the points \emph{visited} by $\pi$ are
$\vec{s} + \pi(1) + \cdots + \pi(i)$
for all $i \in \{0, \ldots, |\pi|\}$,
the last one being the \emph{target} point.
We say that $\pi$ is \emph{admissible} from $\vec{s}$ iff
$\vec{s}\step{\pi}\vec{t}$ for some $\vec{t}$, i.e.,
iff all the points visited are in~$\N^d$,
and also call $\pi$ a path \emph{from $\vec{s}$ to $\vec{t}$} in this case.

\paragraph{Vector Addition Systems and Linear Path Schemes.}
We consider a $d$-dimensional \emph{\textVASS} ($d$-\VASS) to be
a language over a finite alphabet $\vec{A} \subseteq \Z^d$
given by a non-deterministic finite automaton~$V$.

A \emph{\textLPS} (\LPS) is a special case when the language is
given by a regular expression of the form
\[
    \Lambda =
    \alpha_0
    \beta_1^*
    \alpha_1
    \beta_2^*
    \cdots
    \beta_K^*
    \alpha_K
\]
where all $\alpha_i$ and $\beta_i$ are words in $\sys{A}^*$.
We call $\beta_1$, \ldots, $\beta_K$ the \emph{cycles} of $\Lambda$.
Its \emph{length} is $\len{\Lambda}\eqdef
\len{\alpha_0
    \beta_1
    \alpha_1
    \beta_2
    \cdots
    \beta_k
    \alpha_k
}$,
and its \emph{norm} $\norm{\Lambda}$ is 
the maximum norm of any vector (i.e.\ letter) occuring in~$\Lambda$.

Restricting further, we call $\Lambda$ \emph{simple} (an~\emph{S}\/\LPS) 
when all $\alpha_i$ and $\beta_i$ are 
of length $1$, i.e., single vectors from~$\sys{A}$.

\paragraph{Paths of Linear Path Schemes.}
We regard a path of an \LPS\ as above to be given by a sequence of exponents,
i.e.\ $n_1, \ldots, n_K$ where each $n_i$ specifies 
how many times the cycle $\beta_i$ is repeated in the path.

Note that several sequences of exponents may give the same word over $\vec{A}$.
However, this non-uniqueness of representations will not cause difficulties.

\paragraph{Reachability Problems.}
These are the membership problems of the reachability relations 
that are induced by the \VASS\ and \LPS:
\begin{quote}
Given a $d$-\VASS\ $V$ (resp., \LPS\ $\Lambda$)
and vectors $\vec{s}, \vec{t} \in \N^d$, 
decide whether $\vec{s} \step{V} \vec{t}$ 
(resp., $\vec{s} \step{\Lambda} \vec{t}$).
\end{quote}

\noindent
There are two variants of the problems: 
\emph{unary} and \emph{binary}, depending on how the integers in 
$V$ (resp., $\Lambda$), $\vec{s}$ and $\vec{t}$ are encoded.

\section{Get Set}
We have six lemmas here that are useful in the sequel.
The first four are essentially simple consequences in the plane of
Cramer's Rule and Farkas-Minkowski-Weyl's Theorem.

From Cramer's Rule, we get that for cones that contain the zero vector,
the latter is expressible using at most three vectors from the spanning set,
moreover with small positive coefficients:

\begin{lemma}
    \label{l:small.zero}
    If the cone of $\vec{C} \subseteqfin \Z^2$
    contains $\vec{0}$, then
    $\vec{0}$ is a nonempty linear combination
    of at most three vectors from $\vec{C}$
    and with coefficients in $\{1, \ldots, 2 \norm{\vec{C}}^2\}$.
    
    Furthermore, if $\vec{0}$ cannot be expressed like this with fewer than three vectors, the cone of $\vec{C}$ is equal to $\Q^2$.
\end{lemma}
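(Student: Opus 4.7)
The plan is to start from an arbitrary nonempty positive-rational combination $\sum_i \lambda_i \vec{c}_i = \vec{0}$ with $\lambda_i > 0$ (which exists by hypothesis), reduce the number of vectors used to at most three, and then apply Cramer's Rule to bound the coefficients. I would first handle two easy cases. If some $\vec{c} \in \vec{C}$ equals $\vec{0}$, then $1 \cdot \vec{c} = \vec{0}$ uses one vector. If two non-zero vectors $\vec{c}_1, \vec{c}_2 \in \vec{C}$ are anti-parallel, writing $\vec{c}_1 = -(p/q) \vec{c}_2$ with $\gcd(p,q) = 1$ gives $q \vec{c}_1 + p \vec{c}_2 = \vec{0}$, and reading off any nonzero coordinate shows $p \mid |\vec{c}_{1,j}|$ and $q \mid |\vec{c}_{2,j}|$, hence $p, q \leq \norm{\vec{C}} \leq 2\norm{\vec{C}}^2$.

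Next I would reduce any such positive combination to at most three vectors by a standard Carathéodory-style elimination: take a combination using the fewest vectors, and suppose for contradiction it uses $n \geq 4$. Since any three vectors in $\Q^2$ are linearly dependent, there exist $(a_1, a_2, a_3) \neq \vec{0}$ with $a_1 \vec{c}_1 + a_2 \vec{c}_2 + a_3 \vec{c}_3 = \vec{0}$. Perturbing $\lambda_i \mapsto \lambda_i + t a_i$ for $i \leq 3$ keeps the combination zero, and one can choose $t$ so that one of the three becomes zero while the others stay non-negative, contradicting minimality.

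Having reduced to three vectors $\vec{c}_1, \vec{c}_2, \vec{c}_3$ (with no pair anti-parallel and none zero, else the earlier cases apply), Cramer's Rule determines the solution of $\alpha_1 \vec{c}_1 + \alpha_2 \vec{c}_2 + \alpha_3 \vec{c}_3 = \vec{0}$ uniquely up to scaling as $(\alpha_1, \alpha_2, \alpha_3) = \pm(D_{23}, D_{31}, D_{12})$, where $D_{ij} = \det(\vec{c}_i, \vec{c}_j)$. Each $D_{ij}$ is an integer of absolute value at most $2\norm{\vec{C}}^2$, and since a strictly positive solution is known to exist, the three determinants must share a common sign, so choosing that sign gives the required integer coefficients in $\{1, \ldots, 2\norm{\vec{C}}^2\}$.

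For the \emph{furthermore} clause I would argue by contraposition: if the cone is strictly contained in $\Q^2$, then (being convex in dimension two) it lies inside some closed half-plane $\{\vec{v} : \langle \vec{n}, \vec{v} \rangle \geq 0\}$ with $\vec{n} \neq \vec{0}$. Applying $\langle \vec{n}, \cdot \rangle$ to the equation $\sum \lambda_i \vec{c}_i = \vec{0}$ forces every vector used to satisfy $\langle \vec{n}, \vec{c}_i \rangle = 0$, so all used vectors are collinear through the origin, and the earlier one-vector or anti-parallel two-vector case applies. I expect the main obstacle to be the Carathéodory-style reduction argument (matching signs to preserve positivity while eliminating a vector); the determinant bound from Cramer is then essentially automatic.
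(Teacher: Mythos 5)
Your proof is correct and follows essentially the same route as the paper's: dispose of the zero-vector and anti-parallel cases first, reduce a minimal positive combination to at most three vectors by a Carath\'eodory-type elimination, and read off the bounded integer coefficients from Cramer's Rule (the $2\times 2$ minors $D_{ij}$, which is exactly what the paper computes). The only cosmetic differences are that you eliminate a vector by perturbing along a linear dependence rather than via the paper's closed-half-plane pigeonhole argument, and you prove the \emph{furthermore} clause by a separating-half-plane contrapositive rather than by directly exhibiting $-\vec{a}$ and $-\vec{b}$ in the cone; both variants are sound.
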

\begin{proof}
   If $\vec{C}$ contains $\vec{0}$, the statement is trivial.
   If $\vec{C}$ contains a vector $\vec{a}$ with a negative coordinate $\vec{a}_i$ as well as a vector $\vec{b}=-\lambda \vec{a}$ for some positive rational $\lambda$, then $\vec{0}$ can be expressed as $\vec{b}_i \vec{a}-\vec{a}_i \vec{b}$ and we are done.
   So now assume that $\vec{C}$ does not contain vectors $\vec{a}$ and $\vec{b}$ like this.

Consider a minimal subset $\vec{C}'\subseteq \vec{C}$ such that $\vec{0}$ can be expressed as a linear combination $\lambda_1 \vec{a}^{(1)}+\cdots +\lambda_{|\vec{C}'|} \vec{a}^{(|\vec{C}'|)}$
with positive rational coefficients $\lambda_i$ of vectors $\vec{a}^{(i)}\in \vec{C}'$.
Assume for contradiction that $|\vec{C}'| > 3$. Then, there must be a closed half-plane containing at least $3$ vectors, say w.l.o.g.~$\vec{a}^{(1)}$, $\vec{a}^{(2)}$, and $\vec{a}^{(3)}$, from $\vec{C}'$. One of these three vectors can be expressed as a non-negative linear combination of the other two.
Without loss of generality assume $\vec{a}^{(1)}= c_1 \vec{a}^{(2)}+c_2 \vec{a}^{(3)}$ with $c_1,c_2 \ge 0$. But then we can write
\[
\vec{0}=\lambda_1 ( c_1 \vec{a}^{(2)}+c_2 \vec{a}^{(3)}) +\lambda_2 \vec{a}^{(2)}+\lambda_3 \vec{a}^{(3)} +\cdots +\lambda_{|\vec{C}'|} \vec{a}^{(|\vec{C}'| )}
\]
and express $\vec{0}$ as a linear combination with positive coefficients of only $|\vec{C}'|-1$ vectors contradicting the minimality of $\vec{C}'$.

Therefore we can choose three vectors $\vec{a},\vec{b},\vec{c}\in\vec{C}$
    such that there are strictly positive $x_1,x_2,x_3$ and
    $x_1\vec{a}+x_2\vec{b}+x_3\vec{c}=\vec{0}$.
    
    The equation has infinitely many solution since we can scale the coefficients. However, if we set $x_3$ to be, say, $|\vec{b}_1\vec{a}_2-\vec{a}_1\vec{b}_2|$ the solution becomes unique (since $\vec{a}$ and $\vec{b}$ are linearly independent) and it can be easily checked that the solution obtained by Cramer's rule is $x_1= |\vec{c}_1\vec{b}_2-\vec{b}_1\vec{c}_2|$ and $x_2 = |\vec{a}_1\vec{c}_2-\vec{c}_1\vec{a}_2|$.

For the second statement of the lemma observe that we can express $-\vec{a}$ and $-\vec{b}$ as linear combinations of $\vec{a}$, $\vec{b}$, and $\vec{c}$ with positive rationals. For example, $-\vec{a}= (x_2\vec{b}+x_3\vec{c})/x_1$. Since $\vec{a}$ and $\vec{b}$ are linearly independent, any vector in $\Q^2$ can be expressed as a linear combination of $\vec{a}$ and $\vec{b}$ using rational coefficients. Combined with the fact that we can express $-\vec{a}$ and $-\vec{b}$ the claim follows.
    \end{proof}

The next two lemmas apply to the other case,
i.e.\ when the cone does not contain the zero vector:
firstly, such cones are determined by pairs of 
outermost vectors in their spanning sets; 
and secondly, they are contained in 
open halfplanes determined by small vectors.

\newcommand{\lvec}[1]{\vec{#1}_\circlearrowleft}
\newcommand{\rvec}[1]{\vec{#1}_\circlearrowright}
Let us write
$\rvec{v} \eqdef \tuple{ \vec{v}_2,-\vec{v}_1}$ and
$\lvec{v} \eqdef \tuple{-\vec{v}_2,\vec{v}_1}$
for the vector $\vec{v}\in\Z^2$ rotated $90^\circ$ clockwise and anticlockwise,
respectively.

\begin{lemma}
    \label{l:outermost}
    If the cone of $\emptyset \neq \vec{C}\subseteq\Z^2$
    does not contain $\vec{0}$, then
    there are two vectors $\vec{a},\vec{b}\in\vec{C}$ such that
    $\{\vec{a,b}\}$ spans the same cone as $\vec{C}$, and for
            all $\vec{c}$ in the cone of $\vec{C}$,
        $\lvec{a}\cdot\vec{c}\ge 0$ and $\rvec{b}\cdot\vec{c}\ge 0$.
\end{lemma}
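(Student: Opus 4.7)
The plan is to exploit planar geometry: since the cone of $\vec{C}$ avoids $\vec{0}$, it is contained in an open half-plane, and then $\vec{a}$ and $\vec{b}$ can be chosen as the extremal elements of $\vec{C}$ within that half-plane (the most clockwise and the most counterclockwise, respectively).

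First I would argue that the cone of $\vec{C}$ lies in some open half-plane of the plane. If it did not, the angular spread of $\vec{C}$ would reach or exceed $\pi$. In the case of an angular spread equal to $\pi$, two vectors of $\vec{C}$ would be positive-rational negatives of each other and so combine positively to $\vec{0}$. In the case of spread strictly exceeding $\pi$, the same dimension-three decomposition argument used in \aref{Lemma}{l:small.zero} produces a strictly positive rational combination of three vectors of $\vec{C}$ equal to $\vec{0}$. Either alternative contradicts the hypothesis on the cone.

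Fixing a containing open half-plane, I would totally order the finitely many elements of $\vec{C}$ by direction using the sign of the planar determinant $\vec{u}_1\vec{v}_2 - \vec{u}_2\vec{v}_1$; this yields a clean comparison of rays within any open half-plane without having to invoke real-valued angles. Let $\vec{a}$ be the most clockwise element and $\vec{b}$ the most counterclockwise one. For each $\vec{c}\in\vec{C}$, the direction of $\vec{c}$ lies (weakly) between those of $\vec{a}$ and $\vec{b}$, and since the sector between them spans less than $\pi$, standard planar linear algebra writes $\vec{c}$ as a non-negative rational combination of $\vec{a}$ and $\vec{b}$. Closing under positive combinations shows that the cone of $\vec{C}$ equals the cone of $\{\vec{a},\vec{b}\}$.

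Finally, by the choice of $\vec{a}$ as the most clockwise element, every other $\vec{c}\in\vec{C}$ lies on the counterclockwise side of the line through $\vec{a}$, which is precisely the condition $\lvec{a}\cdot\vec{c}\ge 0$; linearity in $\vec{c}$ then extends the inequality to every vector in the cone. A symmetric argument applied to $\vec{b}$ yields $\rvec{b}\cdot\vec{c}\ge 0$. I expect the main subtlety to be justifying an \emph{open} rather than merely closed half-plane in the first step, and the angular-spread case split above is what resolves this.
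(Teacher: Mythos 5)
Your proof is correct, but it reaches the two outermost vectors by a different route than the paper. The paper's proof first shows that a minimum-size subset of $\vec{C}$ spanning the same cone has at most two elements, via an exchange argument: any three vectors are linearly dependent, and the signs of the dependence coefficients force either one vector to be redundant or $\vec{0}$ to lie in the cone. It then fixes the roles of $\vec{a}$ and $\vec{b}$ by a swap-based dichotomy on the sign of $\vec{b}\cdot\lvec{a}$, and obtains both inequalities by decomposing an arbitrary cone element as $x\vec{a}+y\vec{b}$ with $x,y\ge 0$. You instead prove open-half-plane containment first, order $\vec{C}$ angularly inside that half-plane, and take the two extremes; the inequalities $\lvec{a}\cdot\vec{c}\ge 0$ and $\rvec{b}\cdot\vec{c}\ge 0$ then come for free from extremality (extended by linearity to the cone), and the spanning property follows from betweenness. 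Your version buys an unambiguous identification of which endpoint is $\vec{a}$ and which is $\vec{b}$, at the price of justifying the half-plane containment, a planar Gordan-type fact. On that step your appeal to Lemma~\ref{l:small.zero} is slightly misdirected: that lemma shrinks an \emph{already given} positive representation of $\vec{0}$ down to three vectors; it does not itself show that a set contained in no closed half-plane positively represents $\vec{0}$. The fact is still easy with the angular machinery you set up (if every gap between consecutive directions is less than $\pi$, the antipode of any element lands in the sector spanned by two consecutive elements, yielding a three-term positive representation of $\vec{0}$; a gap of exactly $\pi$ yields your antipodal pair), so this is a citation slip rather than a gap. Finally, note that both your argument and the paper's tacitly use that $\vec{C}$ is finite, which is how the lemma is applied throughout even though the statement only says $\vec{C}\subseteq\Z^2$.
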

\begin{proof}
Consider a subset $\vec{C}'\subseteq\vec{C}$ of minimum size that spans the same cone as $\vec{C}$.
Assume for contradiction that $|\vec{C}'| > 2$. Then the set contains three vectors $\vec{x}$, $\vec{y}$, and $\vec{z}$
and because these vectors must be linearly dependent we have $\vec{z} = \lambda_1 \vec{x}+\lambda_2 \vec{y}$ for some rationals $\lambda_1$ and $\lambda_2$.
We can assume, without loss of generality, that $\lambda_1$ and $\lambda_2$ do not have different signs (otherwise we can appropriately rename $\vec{x}$, $\vec{y}$, and $\vec{z}$).
If $\lambda_1$ and $\lambda_2$ are non-negative, $\vec{C}'\setminus \{\vec{z}\}$ still spans the same cone as $\vec{C}'$ since in any positive combination, $\vec{z}$ can be replaced by $\lambda_1 \vec{x}+\lambda_2 \vec{y}$. If however, $\lambda_1$ and $\lambda_2$ are non-positive, the cone spanned by $\vec{C}'$ contains $\vec{0}$ since $\vec{0}= \vec{z} -  \lambda_1 \vec{x} -\lambda_2 \vec{y}$. In both cases we get a contradiction to our assumptions.

So there must indeed be two vectors $\vec{a,b}\in\vec{C}$, not necessarily different,
that span the same cone as $\vec{C}$.
Observe that
$\vec{b}\cdot \lvec{a} < 0
\iff
\vec{b}\cdot \rvec{a}
> 0$ because $\rvec{a} = -\lvec{a}$.
Further observe that $\vec{b}\cdot \rvec{a}=\lvec{b}\cdot \vec{a}$.
Therefore, either
$\vec{b}\cdot \lvec{a} \ge 0$ or $\vec{b}\cdot \rvec{a} = \lvec{b}\cdot \vec{a}  \ge 0$ holds.
We assume w.l.o.g.~that $\vec{b}\cdot \lvec{a} \ge 0$, since otherwise we can swap the names of  $\vec{a}$ and $\vec{b}$.

Pick any $\vec{c}\in\vec{C}$.
Since the cone of $\{\vec{a,b}\}$ contains $\vec{c}$, there exist $x,y\ge 0$ such
that
$\vec{c}=x\vec{a}+y\vec{b}$ and therefore
\begin{align*}
    \vec{c}\cdot\lvec{a} = (x\vec{a}+y\vec{b})\cdot\lvec{a}
    &= x\vec{a}\cdot\lvec{a} + y\vec{b}\cdot\lvec{a} \ge 0,
\end{align*}
because $\vec{a}x\cdot\lvec{a}=0$. Analogously,
using $y\vec{b}\cdot\rvec{b}=0$,
we get
\begin{align*}
    \vec{c}\cdot\rvec{b} = (x\vec{a}+y\vec{b})\cdot\rvec{b}
    &= x\vec{a}\cdot\rvec{b} + y\vec{b}\cdot\rvec{b} \ge 0. \qedhere
\end{align*}
\end{proof}

\begin{lemma}
    \label{l:halfplane}
    If the cone of $\emptyset \neq \vec{C}\subseteqfin\Z^2$
    does not contain $\vec{0}$, then
    there exists a vector $\vec{p} \in \mathbb{Z}^2$ such that
$\norm{\vec{p}} \leq 2 \norm{\vec{C}}$ and
$\vec{p} \cdot \vec{c} > 0$ for all $\vec{c} \in \vec{C}$.
\end{lemma}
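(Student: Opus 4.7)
The plan is to reduce to \aref{Lemma}{l:outermost}: extract two vectors $\vec{a},\vec{b}\in\vec{C}$ that span the same cone as $\vec{C}$ and satisfy $\lvec{a}\cdot\vec{c}\ge 0$ and $\rvec{b}\cdot\vec{c}\ge 0$ for every $\vec{c}$ in the cone. The natural candidate witness is then $\vec{p} := \lvec{a}+\rvec{b}$. The norm bound falls out immediately since a $90^\circ$ rotation preserves the infinity norm, giving $\norm{\vec{p}}\le\norm{\lvec{a}}+\norm{\rvec{b}}=\norm{\vec{a}}+\norm{\vec{b}}\le 2\norm{\vec{C}}$.

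The substantive step is verifying strict positivity. For any $\vec{c}\in\vec{C}$, $\vec{p}\cdot\vec{c}=\lvec{a}\cdot\vec{c}+\rvec{b}\cdot\vec{c}\ge 0$ by \aref{Lemma}{l:outermost}, and equality would force $\vec{c}$ to be orthogonal to both $\lvec{a}$ and $\rvec{b}$, i.e.\ a scalar multiple of both $\vec{a}$ and $\vec{b}$. When $\vec{a}$ and $\vec{b}$ are linearly independent this pins $\vec{c}=\vec{0}$; but $\vec{0}\notin\vec{C}$, for otherwise $\vec{0}$ would lie in the cone of $\vec{C}$, contradicting the hypothesis.

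The only remaining issue --- and the step I expect to be the real obstacle --- is the degenerate situation where the outermost pair from \aref{Lemma}{l:outermost} turns out to be collinear, so that the cone of $\vec{C}$ collapses to a single ray. In that case $\lvec{a}+\rvec{b}$ can vanish (for example when $\vec{a}=\vec{b}$), rendering the above candidate useless. I plan to handle this case separately by instead setting $\vec{p}:=\vec{a}$: every $\vec{c}\in\vec{C}$ is then a strictly positive rational multiple of $\vec{a}$ (any non-positive multiple would drop $\vec{0}$ into the cone), so $\vec{p}\cdot\vec{c}>0$, and $\norm{\vec{p}}=\norm{\vec{a}}\le\norm{\vec{C}}$ is well within the allowed bound.
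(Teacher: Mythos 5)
Your proof is correct and follows essentially the same route as the paper: both invoke Lemma~\ref{l:outermost}, take $\vec{p}=\lvec{a}+\rvec{b}$ when $\vec{a},\vec{b}$ are linearly independent (with strict positivity because $\vec{c}\neq\vec{0}$ cannot be orthogonal to both rotated vectors), and fall back to $\vec{p}=\vec{a}$ in the degenerate single-ray case. Your version is slightly more explicit than the paper's about why equality cannot occur, but the argument is the same.
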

\begin{proof}
According to \aref{Lemma}{l:outermost} we have vectors $\vec{a}, \vec{b} \in \vec{C}$ such that
$\{\vec{a,b}\}$ spans the same cone as $\vec{C}$, and for
            all $\vec{c}\in\vec{C}$,
        $\lvec{a}\cdot\vec{c}\ge 0$ and $\rvec{b}\cdot\vec{c}\ge 0$.
        
If $\vec{a}$ alone already spans the same cone as $\vec{C}$, we can choose $\vec{p}=\vec{a}$ and are done.
Otherwise, $\vec{a}$ and $\vec{b}$ are linearly independent and we choose $\vec{p}=\lvec{a}+\rvec{b}$. Clearly $\norm{\vec{p}} \leq 2 \norm{\vec{C}}$.
For any $\vec{c}\in\vec{C}$, $\vec{p}\cdot \vec{c} =(\lvec{a}+\rvec{b})\cdot \vec{c}$. Since $\vec{a}$ and $\vec{b}$ are linearly independent,
$\lvec{a}\cdot \vec{c} \neq 0$ or $\rvec{b}\cdot \vec{c} \neq 0$ and therefore $(\lvec{a}+\rvec{b})\cdot \vec{c} > 0$.
\end{proof}

Our last lemma dealing with cones gives some additional properties for the structure of the cones when it is known that the cone does not contain
some vector. For simplicity, and because it is all we will need later, we focus on the case that $\tuple{0,1}$ is not contained in the cone.

\begin{lemma}
    \label{l:vectorexcludingcones}
    
    Let $\emptyset \neq \vec{C}\subseteqfin\Z^2$ be a set not containing $\vec{0}$.
    If the cone of $\vec{C}$
    does not contain $\tuple{0,1}$, then
    there is a vector $\vec{p} \in \mathbb{Z}^2$ such that
    \begin{itemize}
        \item $\norm{\vec{p}} \leq \norm{\vec{C}}$,
        \item $\vec{p} \cdot \tuple{0,1}<0$,
        \item $\vec{p} \cdot \vec{c} \ge 0$ for all $\vec{c} \in \vec{C}$, and
        \item if $\vec{p}_1 < 0$, then $\rvec{p} \in \vec{C}$.
    \end{itemize}
\end{lemma}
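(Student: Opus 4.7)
The plan is to case-split on whether the cone of $\vec{C}$ contains $\vec{0}$; in each branch I exhibit $\vec{p}$ explicitly from a short menu of candidates (the vector $\tuple{0,-1}$, and $90^\circ$ rotations $\lvec{c}, \rvec{c}$ of suitable $\vec{c} \in \vec{C}$) and verify all four required properties. The norm bound $\norm{\vec{p}} \le \norm{\vec{C}}$ comes essentially for free from the choice.

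If the cone contains $\vec{0}$, the hypothesis $\tuple{0,1} \notin$ cone rules out the cone being all of $\Q^2$, so the second statement of \aref{Lemma}{l:small.zero} forces $\vec{0}$ to be a positive combination of just two vectors of $\vec{C}$, necessarily an antipodal pair. Hence the cone is a single line $L$ through the origin, $L$ is not the $y$-axis, and since $L$ is a line rather than a ray I can pick $\vec{c} \in \vec{C}$ with $\vec{c}_1 < 0$. Setting $\vec{p} \eqdef \lvec{c}$, we have $\vec{p}_2 = \vec{c}_1 < 0$; every $\vec{c}' \in \vec{C}$ is a scalar multiple of $\vec{c}$ so $\vec{p} \cdot \vec{c}' = 0$; the norm equals $\norm{\vec{c}}$; and if $\vec{p}_1 = -\vec{c}_2 < 0$ then $\rvec{p} = \vec{c} \in \vec{C}$ discharges the last clause.

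If the cone does not contain $\vec{0}$, I apply \aref{Lemma}{l:outermost} to obtain outermost $\vec{a}, \vec{b} \in \vec{C}$ and consider three candidates: $\lvec{a}$, $\rvec{b}$, and $\tuple{0,-1}$. By \aref{Lemma}{l:outermost} the first two have non-negative dot product with every cone vector, and the third does so exactly when every cone vector has non-positive second coordinate. The analysis has three subcases. \emph{(i)} If $\vec{a}_1 < 0$, take $\vec{p} \eqdef \lvec{a}$: then $\vec{p}_2 = \vec{a}_1 < 0$, and if $\vec{p}_1 = -\vec{a}_2 < 0$ triggers the last clause then the required rotation $\rvec{p}$ equals $\vec{a} \in \vec{C}$. \emph{(ii)} If $\vec{a}_2 \le 0$ and $\vec{b}_2 \le 0$, take $\vec{p} \eqdef \tuple{0,-1}$: every cone vector has non-positive second coordinate by non-negative combinations, and $\vec{p}_1 = 0$ bypasses the last clause. \emph{(iii)} Otherwise, take $\vec{p} \eqdef \rvec{b}$ and rely on $\vec{b}_1 > 0$ and $\vec{b}_2 \ge 0$, which give $\vec{p}_2 = -\vec{b}_1 < 0$ and $\vec{p}_1 = \vec{b}_2 \ge 0$ so the last clause is again bypassed.

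The main obstacle is exhaustion: showing that subcase (iii) holds whenever (i) and (ii) fail. The failure of (i) gives $\vec{a}_1 \ge 0$; combined with $\tuple{0,1} \notin$ cone and the outermost ordering from \aref{Lemma}{l:outermost} ($\vec{a}$ most clockwise, $\vec{b}$ most counterclockwise, arc length $<180^\circ$), the counterclockwise arc from $\vec{a}$ to $\vec{b}$ must stop strictly before direction $\tuple{0,1}$, forcing $\vec{b}_1 \ge 0$. If additionally $\vec{b}_2 < 0$, then $\vec{b}$ sits in the open fourth quadrant and the ordering puts $\vec{a}$ in the closed arc from the negative $y$-axis to $\vec{b}$, giving $\vec{a}_2 \le 0$ too and collapsing back into subcase (ii). Hence under (iii) we have $\vec{b}_2 \ge 0$; and since $\vec{b}_1 = 0$ paired with $\vec{b}_2 > 0$ would put $\tuple{0,1}$ into the cone, we must have $\vec{b}_1 > 0$. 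The degenerate case $\vec{a} = \vec{b}$ (cone is a ray) fits this analysis directly.
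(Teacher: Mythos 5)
Your overall strategy---a case split on whether the cone of $\vec{C}$ contains $\vec{0}$, with the explicit candidates $\lvec{a}$, $\tuple{0,-1}$, $\rvec{b}$ in the pointed case and $\lvec{c}$ for a member $\vec{c}$ of an antipodal pair in the other---is exactly the paper's, and your treatment of the case where the cone does \emph{not} contain $\vec{0}$ is correct: the exhaustion argument via the counterclockwise arc from $\vec{a}$ to $\vec{b}$ is an informal but sound substitute for the paper's determinant computations.

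The branch where the cone contains $\vec{0}$, however, has a genuine error. From the fact that $\vec{0}$ is a positive combination of an antipodal pair $\vec{a},\vec{b}\in\vec{C}$ you conclude that ``the cone is a single line $L$'' and hence that every $\vec{c}'\in\vec{C}$ is a scalar multiple of $\vec{c}$, so that $\vec{p}\cdot\vec{c}'=0$. That is false: the antipodal pair only forces the cone to \emph{contain} the line $L$; other elements of $\vec{C}$ may lie off $L$, in which case the cone is a closed half-plane bounded by $L$. For instance $\vec{C}=\{\tuple{1,0},\tuple{-1,0},\tuple{1,-1}\}$ satisfies all hypotheses, its cone is the closed lower half-plane, and $\tuple{1,-1}$ is not a multiple of your $\vec{c}=\tuple{-1,0}$. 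Your chosen $\vec{p}=\lvec{c}$ is still the right vector (here $\vec{p}=\tuple{0,-1}$ and $\vec{p}\cdot\tuple{1,-1}=1\ge 0$), but the third bullet now needs an argument rather than following from collinearity: if some $\vec{c}'\in\vec{C}$ had $\vec{p}\cdot\vec{c}'<0$, then $\vec{c}'$ would lie strictly on the same side of $L$ as $\tuple{0,1}$, and together with the full line $L$ (already in the cone) it would generate a half-plane containing $\tuple{0,1}$, contradicting the hypothesis. This is precisely how the paper closes the case---it exhibits $\tuple{0,1}$ explicitly as a positive combination of $\vec{c}'$ with $\vec{a}$ or $\vec{b}$---so the gap is repairable, but as written your verification of $\vec{p}\cdot\vec{c}\ge 0$ in this branch does not go through.
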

\begin{proof}
We distinguish two basic cases based on whether the cone of $\vec{C}$ contains $\vec{0}$ or not.
First suppose the cone of $\vec{C}$ does not contain $\vec{0}$.
Then, by \aref{Lemma}{l:outermost}, there are vectors
$\vec{a},\vec{b}\in\vec{C}$ such that
    $\{\vec{a,b}\}$ spans the same cone as $\vec{C}$, and for
            all $\vec{c}\in\vec{C}$,
$\lvec{a}\cdot\vec{c}\ge 0$ and $\rvec{b}\cdot\vec{c}\ge 0$. Note that, in particular, we can plug in $\vec{b}$ for $\vec{c}$
and then must have $\lvec{a}\cdot\vec{b}\ge 0$ which implies
$\vec{a}_1\vec{b}_2 - \vec{b}_1\vec{a}_2 \ge 0$.
        
There are three candidates for the choice of $\vec{p}$:
$\lvec{a}$, $\tuple{0,-1}$, and $\rvec{b}$.
Suppose $\vec{p}=\lvec{a}$ does not satisfy all conditions of
the lemma. Then we must have $\lvec{a} \cdot \tuple{0,1} \ge 0$
and therefore $\vec{a}_1 \ge 0$.
Assume further that $\tuple{0,-1}$ also does not satisfy all
conditions of the lemma.
Then there must be a vector $\vec{c}'\in\vec{C}$ with $\vec{c}'_2
> 0$.

We now show that if neither $\lvec{a}$ nor $\tuple{0,-1}$ can be
used for $\vec{p}$, $\rvec{b}$ can.
Assume for contradiction that $\rvec{b} \cdot \tuple{0,1} \ge
0$.
But then $\vec{b}_1 \le 0$ and we can express
$\tuple{0,\vec{a}_1\vec{b}_2-\vec{b}_1\vec{a}_2} =
\vec{a}_1\vec{b}-\vec{b}_1\vec{a}$ as a positive combination
of $\vec{a}$ and $\vec{b}$. Since the cone of $\vec{C}$ does not
contain $\vec{0}$ or $\tuple{0,1}$ we must have
$\vec{a}_1\vec{b}_2-\vec{b}_1\vec{a}_2<0$ which, as we argued
above, cannot be the case.
Here we used the assumption that we do not have $\vec{a}_1=\vec{b}_1=0$. If that
were the case, either $\tuple{0,1}$ would be in the cone (if $\vec{a}_2 > 0$ or $\vec{b}_2 > 0$)
or $\vec{c}'$ could not be in the cone spanned by $\vec{a}$ and $\vec{b}$, which is a contradiction.
                
We conclude that $\rvec{b} \cdot \tuple{0,1} < 0$ and thus
$\vec{b}_1 > 0$. To finish the proof we have to argue that
$\vec{p}=\rvec{b}$ is a valid choice and we do this by showing
that $\vec{p_1}=\vec{b}_2 \ge 0$.
Assume for contradiction that $\vec{b}_2 < 0$. Since $\vec{a}_1
\ge 0$ and $\vec{b}_1\vec{a}_2 \le \vec{a}_1\vec{b}_2 \le 0$, we
can conclude that $\vec{a}_2 \le 0$. However, if $\vec{b}_2 < 0$
and $\vec{a}_2 \le 0$, $\vec{c}'$ cannot be in the cone spanned
by $\vec{a}$ and $\vec{b}$, which is a contradiction.
 
 We now move to the second case in which we assume that the cone of $\vec{C}$ does contain $\vec{0}$.
 Then there are two vectors $\vec{a}, \vec{b} \in \vec{C}$ such that $\vec{a}+\lambda\vec{b}=\vec{0}$ for some positive rational $\lambda$.
This is because if $\vec{0}$ could only be expressed with three or more vectors, according to \aref{Lemma}{l:small.zero}, $\tuple{0,1}$ would also be in the cone of $\vec{C}$. 

 Clearly, either $\vec{a}_1 \le 0$ or $\vec{b}_1 \le 0$. Without loss of generality let $\vec{a}_1 \le 0$.
 We choose $\vec{p} = \lvec{a}$.
 
 The only condition of the lemma not trivially met is that $\vec{p}\cdot \vec{c}\ge 0$ for all $\vec{c}\in\vec{C}$. Assume that there is a $\vec{c}\in\vec{C}$
 such that $\lvec{a}\cdot \vec{c}< 0$. Then $\vec{c}_1\vec{a}_2-\vec{a}_1\vec{c}_2 > 0$. If $\vec{c}_1 \ge 0$, $\tuple{0,1}$ would be in the cone of $\vec{C}$ since it can be expressed as
$\vec{c}_1 \cdot \vec{a} - \vec{a}_1 \cdot \vec{c}/(\vec{c}_1\vec{a}_2-\vec{a}_1\vec{c}_2)$. Otherwise $\tuple{0,1}$ would also be in the cone of $\vec{C}$ since it can be expressed as 
 $\vec{b}_1 \cdot \vec{c} - \vec{c}_1 \cdot \vec{b}/(\vec{b}_1\vec{c}_2-\vec{c}_1\vec{b}_2)$. Either way, we have a contradiction. 
\end{proof}

Moving from rational cones to paths of \SLPS s,
our remaining two lemmas pin down some relatively basic properties
of \SLPS\ paths in which some cycles are repeated `many' times:
firstly, if all those cycles are contained in a halfplane,
then the effect of the path must point roughly in the same direction
(we have a strict and a non-strict version here);
secondly, if the path when started at a point remains sufficiently far 
from both axes (i.e.\ respects a sufficiently wide margin),
then it can be shortened admissibly by a range of multiples of any small vector
that is in the cone spanned by the `often' repeated cycles.

For a path $\pi$ of a $2$-\SLPS\ 
$\Lambda = \alpha_0
    \beta_1^*
    \alpha_1
    \beta_2^*
    \dots
    \beta_K^*
    \alpha_K$
and a bound $B \in \N$, let 
$$\Cycles{B}{\Lambda}{\pi} \,\subseteq\, \Z^2$$
be the set of all cycles of $\Lambda$ 
that are repeated in $\pi$ at least $B$ times.

\begin{lemma}
\label{l:drift}
Suppose $\pi$ is a path of a $2$-\SLPS\ $\Lambda$ with $K$ cycles,
$B \in \N$ and $\vec{p} \in \Z^2$.
\begin{description}
\item[(i)] If $\vec{p} \cdot \vec{a} > 0$ for all $\vec{a} \in \Cycles{B}{\Lambda}{ \pi}$, then
\[\vec{p} \cdot \effect{\pi} \,\geq\,
  |\pi| - (K B + 1) (2 \norm{\Lambda} \, \norm{\vec{p}} + 1).\]
\item[(ii)]
If $\vec{p} \cdot \vec{a} \geq 0$ for all 
$\vec{a} \in \Cycles{B}{\Lambda}{ \pi}$, then
\[\vec{p} \cdot \effect{\pi} \,\geq\,
  - (K B + 1) (2 \norm{\Lambda} \, \norm{\vec{p}}).\]
\end{description}
\end{lemma}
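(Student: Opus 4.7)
My plan is to exploit the simplicity of $\Lambda$ to expand $\effect{\pi}$ explicitly as a linear combination of single vectors, then take the dot product with $\vec{p}$ and bound contributions cycle-by-cycle. Since each $\alpha_i$ and each $\beta_i$ is a single vector of norm at most $\norm{\Lambda}$, a path $\pi$ is encoded by exponents $n_1,\ldots,n_K \in \N$ with
\[
|\pi| \;=\; (K+1) + \sum_{i=1}^K n_i
\quad\text{and}\quad
\effect{\pi} \;=\; \sum_{i=0}^K \alpha_i + \sum_{i=1}^K n_i\, \beta_i.
\]
For any single vector $v$ appearing in $\Lambda$ one has $|\vec{p}\cdot v| \leq 2\norm{\vec{p}}\norm{\Lambda}$; I abbreviate $M := 2\norm{\vec{p}}\norm{\Lambda}$.

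Next I would partition the cycle indices $\{1,\ldots,K\}$ into $I := \{i : n_i \geq B\}$ (precisely the indices of $\Cycles{B}{\Lambda}{\pi}$) and its complement $J$. Taking the inner product of $\vec{p}$ with the decomposition above, the $K+1$ many $\alpha$-terms contribute at least $-(K+1)M$ in total, while the $J$-terms contribute at least $-M\sum_{i\in J} n_i \geq -K(B-1)M$ because $|J|\leq K$ and $n_i\leq B-1$ for $i \in J$. The identity $(K+1) + K(B-1) = KB+1$ will be used to collect these at the end.

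For part~(ii), the hypothesis $\vec{p}\cdot\beta_i \geq 0$ on $I$ makes the $I$-contribution nonnegative, so summing the three bounds immediately yields $\vec{p}\cdot\effect{\pi} \geq -(KB+1)M$, which is the claim. For part~(i), the crucial extra observation is that $\vec{p}\cdot\beta_i > 0$ is a positive integer and therefore at least $1$ on $I$; the $I$-contribution is thus at least $\sum_{i\in I} n_i$. Rewriting this as $\sum_{i=1}^K n_i - \sum_{i\in J} n_i = (|\pi| - K - 1) - \sum_{i\in J}n_i$ and absorbing $\sum_{i\in J} n_i \leq K(B-1)$ into the error yields the stated inequality $\vec{p}\cdot\effect{\pi} \geq |\pi| - (KB+1)(M+1)$.

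There is no real obstacle: the proof is essentially arithmetic bookkeeping. The one conceptual point is the integrality upgrade in~(i) --- replacing $\vec{p}\cdot\beta_i > 0$ by $\vec{p}\cdot\beta_i \geq 1$ --- which is precisely what produces the extra $|\pi|$ term on the right-hand side and so distinguishes the strict-drift bound from the non-strict one.
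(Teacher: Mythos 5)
Your proof is correct and follows essentially the same route as the paper's: both decompose $\effect{\pi}$ into the contribution of the cycles repeated at least $B$ times and a remainder of at most $KB+1$ letters, bound the remainder's dot product with $\vec{p}$ by $-(KB+1)\cdot 2\norm{\Lambda}\norm{\vec{p}}$, and in part~(i) use integrality of $\vec{p}\cdot\beta_i$ to lower-bound each often-repeated occurrence by $1$. Your version just makes the exponent bookkeeping explicit; there is no substantive difference.
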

\begin{proof}
    The effect of $\pi$
    can be decomposed as $\effect{\pi}=\vec{v} + \vec{b}$,
    where $\vec{v}$ is the combined effect of those cycles occurring
    at least $B$ times and $\vec{b}$ is the rest.
    Hence $\vec{v}$ is a linear combination
    $\vec{v}=\sum_{i=1}^\ell\vec{a}^{(i)}$, where
    $\vec{a}^{(i)}\in\Cycles{B}{\Lambda}{\pi}$
    and
    $\vec{b}$ is the effect of a path
    of length $\len{\pi}-\ell \le KB +1$.
    We can therefore estimate
   \begin{equation}
    \vec{p}\cdot\vec{b}\ge
    -2(KB+1)\norm{\Lambda}\norm{\vec{p}}.
   \label{eq:drift1}
   \end{equation} 
   If $\vec{p}\cdot\vec{a}>0$ for all
   $\vec{a}\in\Cycles{B}{\Lambda}{\pi}$ then
   \begin{equation}
   \vec{p}\cdot\vec{v}
   =
   \sum_{i=1}^\ell \vec{p}\cdot\vec{a}^{(i)}
   ~\ge~\ell~\ge~
   \len{\pi}-(KB+1).
   \label{eq:drift2}
   \end{equation}
   The first claim therefore follows by \aref{Equations}{eq:drift1} and \ref{eq:drift2}
   and by the fact that $\vec{p}\cdot\effect{\pi}=\vec{p}\cdot\vec{v}+\vec{p}\cdot\vec{b}$.

   For the second claim, just observe that
   if $\vec{p}\cdot\vec{a}\ge0$ for all
   $\vec{a}\in\Cycles{B}{\Lambda}{\pi}$,
   then
   $\vec{p}\cdot\vec{v} = \sum_{i=1}^l \vec{p}\cdot\vec{a}^{(i)}\ge 0$.
\end{proof}

Let us call a path $\pi'$ a \emph{shortening of path $\pi$ by vector $\vec{e}$}
when $\pi'$ is a proper subword (not necessarily contiguous) of $\pi$
and $\effect{\pi'}=\effect{\pi} - \vec{e}$.

\begin{lemma}
\label{l:cut}
Suppose a path $\pi$ of a $2$-\SLPS\ $\Lambda$,
$N \in \N$, $\vec{c} \in \Z^2$ and $\vec{s} \in \N^2$ satisfy:
\begin{itemize}
\item
$\norm{\Lambda} > 0$ and $\norm{\vec{c}}\le \norm{\Lambda}$,
\item
the cone of $\Cycles{2 \norm{\Lambda}^2 N}{\Lambda}{ \pi}$ contains $\vec{c}$, and
\item 
all points visited by $\pi$ from $\vec{s}$ are in 
$(\N_{{\geq}\, 6 \norm{\Lambda}^3 N})^2$.
\end{itemize}
There exists $\gamma \in \{1, \ldots, 2 \norm{\Lambda}^2\}$ such that,
for all $n \in \{1, \ldots, N\}$,
$\pi$ has a shortening by $n \gamma \vec{c}$
which is admissible from~$\vec{s}$.
\end{lemma}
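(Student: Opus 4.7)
The plan is to find $\gamma \in \{1,\ldots,2\norm{\Lambda}^2\}$ together with integers $m_1,m_2,\ldots$ so that $\gamma\vec{c}$ equals a non-negative combination $\sum_i m_i \vec{a}^{(i)}$ of a small number of cycles from $C := \Cycles{2\norm{\Lambda}^2 N}{\Lambda}{\pi}$; then for each $n$ obtain $\pi'$ by decreasing the exponent of each $\vec{a}^{(i)}$ in $\pi$ by $n m_i$; then verify admissibility using the margin hypothesis.

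For the representation, I distinguish cases. When $\vec{c} \neq \vec{0}$, the two-dimensional Carath\'eodory property for conic hulls (as established inside the proof of \aref{Lemma}{l:outermost}) yields $\vec{a}, \vec{b} \in C$ with $\vec{c} = \lambda_a \vec{a} + \lambda_b \vec{b}$ and $\lambda_a, \lambda_b \geq 0$. In the linearly independent subcase, Cramer's rule sets $\gamma := |\vec{a}_1 \vec{b}_2 - \vec{a}_2 \vec{b}_1|$ and $m_a := |\vec{c}_1 \vec{b}_2 - \vec{b}_1 \vec{c}_2|$, $m_b := |\vec{a}_1 \vec{c}_2 - \vec{c}_1 \vec{a}_2|$, all bounded by $2\norm{\Lambda}^2$ since $\norm{\vec{c}}, \norm{\vec{a}}, \norm{\vec{b}} \leq \norm{\Lambda}$, and $\gamma\vec{c} = m_a\vec{a} + m_b\vec{b}$ by direct check. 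In the linearly dependent subcase (the cone is a single ray, $\vec{c} = \lambda\vec{a}$ with $\lambda = p/q$ in lowest terms), a gcd argument yields $q \mid \gcd(|\vec{a}_1|,|\vec{a}_2|)$, hence $\gamma := q \leq \norm{\Lambda}$ and $m_a := p \leq \norm{\Lambda}$ with $\gamma\vec{c} = m_a\vec{a}$. When $\vec{c} = \vec{0}$, the cone of $C$ contains $\vec{0}$, and \aref{Lemma}{l:small.zero} directly supplies coefficients $m_i \in \{1,\ldots,2\norm{\Lambda}^2\}$ on at most three cycles $\vec{a}^{(i)} \in C$ with $\sum_i m_i \vec{a}^{(i)} = \vec{0}$; set $\gamma := 1$.

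For each $n \in \{1,\ldots,N\}$, I let $\pi'$ be $\pi$ with the exponent of each $\vec{a}^{(i)}$ decreased by $n m_i$. Since every $\vec{a}^{(i)} \in C$ appears at least $2\norm{\Lambda}^2 N$ times in $\pi$ and $n m_i \leq N \cdot 2\norm{\Lambda}^2$, this is a valid proper subword and $\effect{\pi'} = \effect{\pi} - n\gamma\vec{c}$. To check admissibility, observe that at any intermediate letter of $\pi'$ its position from $\vec{s}$ equals that of the matching letter of $\pi$ (pairing equal numbers of copies within each common $\beta^*$ block) minus a partial sum of already-completed removals $n m_i \vec{a}^{(i)}$. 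Each such term has infinity norm at most $N \cdot 2\norm{\Lambda}^2 \cdot \norm{\Lambda} = 2\norm{\Lambda}^3 N$, so a partial sum of at most three of them has norm at most $6\norm{\Lambda}^3 N$. Combined with the hypothesis that every point of $\pi$ from $\vec{s}$ lies in $(\N_{\geq 6\norm{\Lambda}^3 N})^2$, this forces all points of $\pi'$ to lie in $\N^2$.

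The main obstacle is the first step: the case split over degenerate cone configurations (single ray and cones containing $\vec{0}$) has to be done carefully so that $\gamma$ and all $m_i$ land in the advertised ranges, and in particular so that a \emph{single} $\gamma$ works simultaneously for every $n \in \{1,\ldots,N\}$. Once that uniform representation is in place, the construction of $\pi'$ and the admissibility check reduce to routine bookkeeping against the $6\norm{\Lambda}^3 N$ margin.
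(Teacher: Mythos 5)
Your proposal is correct and follows essentially the same route as the paper's proof: express $\gamma\vec{c}$ as a non-negative integer combination, with coefficients at most $2\norm{\Lambda}^2$, of at most three cycles repeated at least $2\norm{\Lambda}^2 N$ times, delete the corresponding occurrences, and check admissibility against the $6\norm{\Lambda}^3 N$ margin. The only (immaterial) difference is in how the small-coefficient decomposition is obtained for $\vec{c}\neq\vec{0}$: the paper applies \aref{Lemma}{l:small.zero} to the set $\{-\vec{c},\vec{a}^{(1)},\vec{a}^{(2)}\}$, whereas you carry out the Cramer computation directly and treat the collinear subcase by a gcd argument.
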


\begin{proof}
Let $\vec{C}\eqdef\Cycles{2 \norm{\Lambda}^2 N}{\Lambda}{ \pi}$.
We claim that
\[\gamma \vec{c} \,=\,
  \lambda_1 \vec{a}^{(1)} + \cdots + \lambda_j \vec{a}^{(j)}\]
for some $j \in \{1, 2, 3\}$,
$\vec{a}^{(1)}, \ldots, \vec{a}^{(j)} \in \vec{C}$ and
$\gamma, \lambda_1, \ldots, \lambda_j \in \{1, \ldots, 2 \norm{\Lambda}^2\}$.
If $\vec{c} = \vec{0}$, this directly follows
from \aref{Lemma}{l:small.zero}.
Otherwise, reasoning as in the proof of \aref{Lemma}{l:outermost},
there must be two vectors $\vec{a}^{(1)}, \vec{a}^{(2)} \in \vec{C}$
such that the cone spanned by $\{\vec{a}^{(1)}, \vec{a}^{(2)}\}$ 
contains $\vec{c}$ but not $\vec{0}$.  Then the claim follows 
by \aref{Lemma}{l:small.zero} applied to the set
$\{-\vec{c}, \vec{a}^{(1)}, \vec{a}^{(2)}\}$.

Now, we can subtract $n \gamma \vec{c}$ from the effect of $\pi$
by deleting $n \lambda_i$ occurences of the cycle $\vec{a}^{(i)}$ 
for all $i \in \{1, \ldots, j\}$.
Any such shortening $\pi'$ is admissible from $\vec{s}$ because, 
for any point visited by $\pi$, the differences between its coordinates and 
the coordinates of the corresponding point visited by $\pi'$ 
are at most $6 \norm{\Lambda}^3 n$.
\end{proof}

\begin{figure}[t]
  \includegraphics[width=0.5\linewidth]{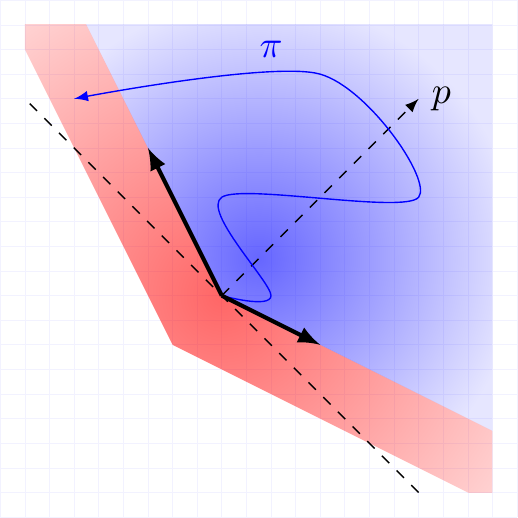}
  \includegraphics[width=0.5\linewidth]{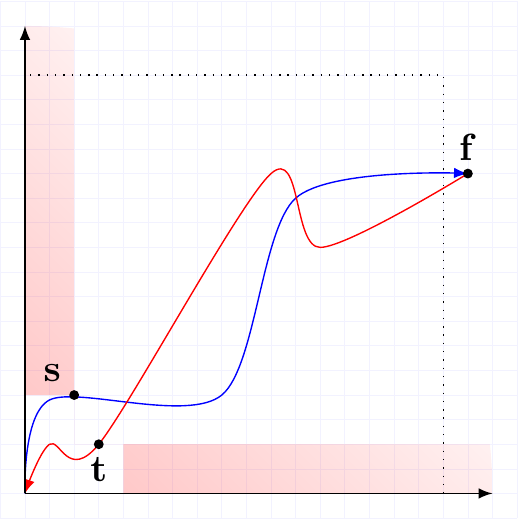}
  \caption{\aref{Lemma}{l:drift} (left):
      The path $\pi$ must remain in the red/blue area.
      In case (ii) the red belt is parallel to the dashed line,
      i.e.~orthogonal to $\vec{p}$.
      \aref{Theorem}{th:far} (right):
      The path from $\vec{s}$ to $\vec{t}$ via a sufficiently large
      point $\vec{f}$ can be shortened.
  }
  \label{fig:pics3}
\end{figure}

\section{Go!}
\label{sec:lps-short}
Here is the bulk of our work.

We present a sequence of theorems that culminates in 
\aref{Theorem}{thm:2SLPS-short}, which establishes that 
if a reachability witness of a $2$-dimensional \textSLPS\ cannot be shortened, 
then it cannot visit points whose norm exceeds a certain polynomial bound 
(in the length and the norm of the \SLPS).

A key step towards the last theorem is \aref{Theorem}{th:far},
where lemmas from the previous section are employed to conclude that
it suffices to prove that shortest reachability witnesses
cannot visit points that are `near' one of the axes 
but further from the other axis than a certain polynomial bound 
(smaller than the one in \aref{Theorem}{thm:2SLPS-short}, see the red margins in
\aref{Figure}{fig:pics3} on the right).

The remainder of our reasoning here is therefore concerned with showing that
shortest reachability witnesses cannot contain points that are,
without loss of generality, within a $y$-axis margin but 
too far from the $x$-axis (more than a polynomial bound).
We accomplish this by proving that, if such a scenario occurs,
then we can focus on a point $\vec{t}$ that is 
within the $y$-axis margin and maximally far from the $x$-axis, 
and find an admissible shortening of the reachability witness 
whose effect on $\vec{t}$ is to decrease its $y$-coordinate by a `small' amount.

Theorems~\ref{th:close.away}--\ref{th:one.visit} provide 
increasingly powerful tools for identifying admissible shortenings 
of paths that in some way climb the $y$-axis.
In the proof of \aref{Theorem}{thm:2SLPS-short},
such shortenings are applied to appropriate 
segments and reversals of segments of reachability witnesses.  
Thus their effects have to be matched
(recall $1$-dimensional hill cutting \citep[cf.\ e.g.][]{VP1975}),
which explains the ranges of possible shortenings in 
Theorems~\ref{th:close.away}--\ref{th:one.visit}.

\begin{figure}[t]
  \includegraphics[width=0.4\linewidth]{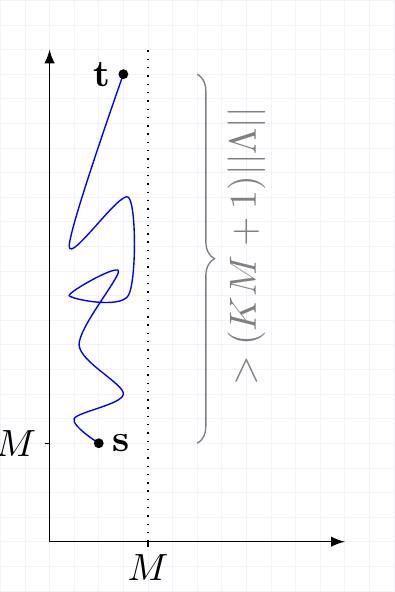}
  \includegraphics[width=0.6\linewidth]{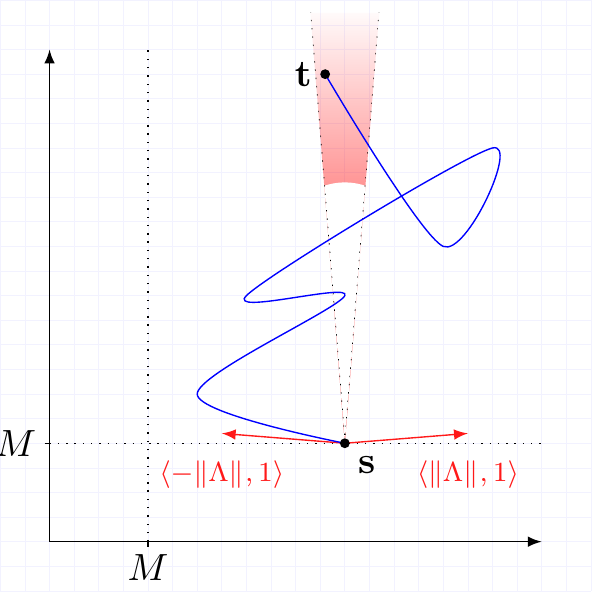}
  \caption{Illustrations of \aref{Theorem}{th:close.away} (left) and
      \aref{Theorem}{th:away.both} (on the right,
      with $M\eqdef 6 \norm{\Lambda}^3 N$).
  }
  \label{fig:pics}
\end{figure}

We begin with handling the case in which a path goes up by a large amount
but only visits points which are close to the $y$-axis and not close to the
$x$-axis, cf.\ \aref{Figure}{fig:pics} on the left.

Recall that, for planar vectors $\vec{v}$, 
we denote their horizontal and vertical components by 
$\vec{v}_1$ and $\vec{v}_2$, respectively.

\begin{theorem}
\label{th:close.away}
Suppose a $2$-\SLPS\ $\Lambda$ with at most $K$ cycles has
a path $\pi$ from point $\vec{s}$ to point $\vec{t}$ such that
for some $M\in\N$
\begin{itemize}
\item
all points visited by $\pi$ from $\vec{s}$ are in
$\mathbb{N}_{{<} M} \times \mathbb{N}_{{\geq} M}$ and
\item
$(\vec{t} - \vec{s})_2 > (K M + 1) ||\Lambda||$.
\end{itemize}
There is $\gamma \in \{1, \ldots, ||\Lambda||\}$ such that,
for all $n \in \{1, \ldots, \lfloor M / \gamma \rfloor\}$,
$\pi$ has a shortening by $\tuple{0, n \gamma}$
which is admissible from~$\vec{s}$.
\end{theorem}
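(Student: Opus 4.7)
The plan is to identify a single cycle $\beta_{i^*}$ of $\Lambda$ whose first coordinate vanishes and whose second coordinate is positive, and which is iterated in $\pi$ at least $M$ times; deleting $n$ copies of $\beta_{i^*}$ for each $n \in \{1,\dots,\lfloor M/\gamma\rfloor\}$ will then provide the required shortening with $\gamma \eqdef (\beta_{i^*})_2$.

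First I would write $\Lambda = \alpha_0 \beta_1^* \alpha_1 \cdots \beta_{K'}^* \alpha_{K'}$ with $K' \le K$ and each $\alpha_i,\beta_i$ a single vector of norm at most $\norm{\Lambda}$, and let $n_i$ be the multiplicity of $\beta_i$ in $\pi$. The effect then decomposes as $\effect{\pi} = \vec{h} + \sum_i n_i \beta_i$ with $\vec{h} \eqdef \alpha_0 + \cdots + \alpha_{K'}$ satisfying $|\vec{h}_2| \le (K{+}1)\norm{\Lambda}$. The key observation is that whenever $(\beta_i)_1 \ne 0$, the first coordinate moves monotonically through the entire $\beta_i^{n_i}$ block, so the confinement of visited first coordinates to $\{0,1,\dots,M{-}1\}$ forces $n_i|(\beta_i)_1| \le M{-}1$, hence $n_i \le M{-}1$ and $n_i|(\beta_i)_2| \le (M{-}1)\norm{\Lambda}$.

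Combining the hypothesis with these bounds, and letting $k_1,k_0$ denote the numbers of cycles with respectively nonzero and zero first coordinate (so $k_0 + k_1 \le K$), I would derive
\[
  \sum_{(\beta_i)_1 = 0} n_i (\beta_i)_2 \,>\, (KM{+}1)\norm{\Lambda} - (K{+}1)\norm{\Lambda} - k_1(M{-}1)\norm{\Lambda} \,=\, (K{-}k_1)(M{-}1)\norm{\Lambda} \,\ge\, k_0(M{-}1)\norm{\Lambda}.
\]
Pigeonhole then yields an index $i^*$ with $(\beta_{i^*})_1 = 0$ and $n_{i^*}(\beta_{i^*})_2 > (M{-}1)\norm{\Lambda}$; in particular $(\beta_{i^*})_2 > 0$. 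Setting $\gamma \eqdef (\beta_{i^*})_2 \in \{1,\dots,\norm{\Lambda}\}$ we get $n_{i^*} > (M{-}1)\norm{\Lambda}/\gamma \ge M{-}1$, and hence $n_{i^*} \ge M \ge \lfloor M/\gamma\rfloor$.

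Finally, admissibility of deleting $n \le \lfloor M/\gamma\rfloor$ copies of $\beta_{i^*}$ is immediate: visited points up to and including the truncated block remain unchanged, while each subsequent visited point is shifted by $-\langle 0, n\gamma\rangle$, so first coordinates are preserved and second coordinates stay non-negative because they were originally at least $M$ and $n\gamma \le M$. The main obstacle I foresee is simply arranging the counting so that the pigeonhole step lands strictly above $(M{-}1)\norm{\Lambda}$ rather than merely above $0$; the hypothesis bound $(KM{+}1)\norm{\Lambda}$ turns out to be exactly the threshold required, perfectly matching the $(K{+}1)\norm{\Lambda}$ contribution from the $\alpha_i$'s and the $\le k_1(M{-}1)\norm{\Lambda}$ contribution from horizontally-moving cycles.
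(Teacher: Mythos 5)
Your proof is correct and follows essentially the same route as the paper: both locate a cycle with effect $\tuple{0,\gamma}$ for some $\gamma\ge 1$ that is iterated at least $M$ times (using the width bound $M$ to force the first coordinate of any heavily repeated cycle to vanish), and then delete up to $\lfloor M/\gamma\rfloor$ copies, with admissibility following because first coordinates are untouched and second coordinates drop by at most $M$ while all visited points started at height at least $M$. Your accounting is in fact slightly more explicit than the paper's, which simply picks some cycle repeated at least $M$ times and asserts its effect is $\tuple{0,\gamma}$ with $\gamma\ge 1$, whereas your pigeonhole over the vertically-moving cycles justifies the strict positivity of $\gamma$.
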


\begin{proof}
There is a cycle $\vec{c}$ in $\pi$ that is repeated at least
$M$ times. Otherwise,
for the effect of $\pi$, $||\vec{t}-\vec{s}|| \le
((K+1)+K\cdot(M-1))\cdot ||\Lambda||=(KM+1)||\Lambda||$, which
contradicts the second assumption of the theorem.
Let $\vec{u}$ and $\vec{v}$ be the points visited right before the first,
and right after the last repetitions of the cycle $\vec{c}$, respectively.
The first coordinate of $\vec{c}$ is 0 since otherwise
$|(\vec{u}-\vec{v})_1|\ge M$, which contradicts the first
assumption of the theorem.
Therefore $\vec{c}=\tuple{0,\gamma}$ for some $\gamma \in \{1, \ldots,
||\Lambda||\}$ and thus, $\pi$ has a shortening by
$\tuple{0, n\gamma}$ for all $n \in \{1, \ldots, \lfloor M / \gamma \rfloor\}$.
This shortening is admissible
since it does not affect the first coordinate of any point
visited, only decreases the second coordinates by at most $
\lfloor M / \gamma \rfloor \cdot \gamma \le M$, and
all visited points have a second coordinate value of at least $M$ prior to the
shortening.
\end{proof}

The following theorem deals with a case in which all points visited on a path
are far from both axes but where the total effect of the path is much bigger in
the second coordinate than the first, cf.\ \aref{Figure}{fig:pics} on the right,
where $M = 6 ||\Lambda||^3 N$.

\begin{theorem}
\label{th:away.both}
Suppose a $2$-\SLPS\ $\Lambda$ with at most $K$ cycles has 
a path $\pi$ from point $\vec{s}$ to point $\vec{t}$ such that
for some $N\in\N$
\begin{itemize}
\item
all points visited by $\pi$ from $\vec{s}$ are in
$(\mathbb{N}_{{\geq}\, 6 ||\Lambda||^3 N})^2$ and
\item 
for all $\lambda \in [-||\Lambda||,||\Lambda||]$,
$\tuple{ \lambda, 1} \cdot (\vec{t} - \vec{s})
 \,>\, (4 K N + 2) ||\Lambda||^4$.
\end{itemize}
$\tuple{0, 1}$ is in the cone of $\Cycles{2 ||\Lambda||^2 N}{\Lambda}{ \pi}$
and there exists $\gamma \in \{1, \ldots, 2 ||\Lambda||^2\}$ such that,
for all $n \in \{1, \ldots, N\}$,
$\pi$ has a shortening by $\tuple{0, n \gamma}$
which is admissible from~$\vec{s}$.
\end{theorem}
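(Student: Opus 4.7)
My plan is to prove the two assertions in sequence: first the cone-membership claim by contradiction using \aref{Lemma}{l:vectorexcludingcones} together with \aref{Lemma}{l:drift}(ii), and then the shortening claim as an immediate application of \aref{Lemma}{l:cut} with $\vec{c}\eqdef\tuple{0,1}$.

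For the cone membership, let $\vec{C}\eqdef\Cycles{2\norm{\Lambda}^2 N}{\Lambda}{\pi}$ and suppose for contradiction that $\tuple{0,1}$ is not in the cone of $\vec{C}$. We may assume $\norm{\Lambda}\ge 1$, for otherwise the theorem's second hypothesis is already violated at $\lambda=0$. Applying \aref{Lemma}{l:vectorexcludingcones} to $\vec{C}$ (the edge case $\vec{C}=\emptyset$ is handled identically by directly taking $\vec{p}\eqdef\tuple{0,-1}$, since then the hypothesis of \aref{Lemma}{l:drift}(ii) is vacuous) yields $\vec{p}\in\Z^2$ with $\norm{\vec{p}}\le\norm{\Lambda}$, $\vec{p}_2<0$, and $\vec{p}\cdot\vec{a}\ge 0$ for every $\vec{a}\in\vec{C}$. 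Setting $c\eqdef -\vec{p}_2\in\{1,\ldots,\norm{\Lambda}\}$ and $\lambda\eqdef -\vec{p}_1/c$, one obtains $\vec{p}=-c\tuple{\lambda,1}$ and $|\lambda|\le\norm{\Lambda}$, so instantiating the second hypothesis at this $\lambda$ and multiplying by $c\ge 1$ gives $-\vec{p}\cdot(\vec{t}-\vec{s})>(4KN+2)\norm{\Lambda}^4$. On the other hand, \aref{Lemma}{l:drift}(ii) applied to $\vec{p}$ with $B\eqdef 2\norm{\Lambda}^2 N$ gives $\vec{p}\cdot(\vec{t}-\vec{s})\ge -(KB+1)\cdot 2\norm{\Lambda}\,\norm{\vec{p}}\ge -(4KN\norm{\Lambda}^4+2\norm{\Lambda}^2)$, i.e.\ $-\vec{p}\cdot(\vec{t}-\vec{s})\le 4KN\norm{\Lambda}^4+2\norm{\Lambda}^2$. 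Combining the two bounds collapses to $2\norm{\Lambda}^4<2\norm{\Lambda}^2$, contradicting $\norm{\Lambda}\ge 1$.

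With $\tuple{0,1}$ now known to lie in the cone of $\vec{C}$, all three preconditions of \aref{Lemma}{l:cut} for $\vec{c}\eqdef\tuple{0,1}$ are satisfied: $\norm{\Lambda}>0$ and $\norm{\vec{c}}=1\le\norm{\Lambda}$; the cone containment we just proved; and the visit bound $(\N_{{\ge}\,6\norm{\Lambda}^3 N})^2$ which is the theorem's first assumption. The lemma therefore produces a $\gamma\in\{1,\ldots,2\norm{\Lambda}^2\}$ and admissible shortenings of $\pi$ by $n\gamma\vec{c}=\tuple{0,n\gamma}$ for every $n\in\{1,\ldots,N\}$, which is precisely the shortening claim.

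The hard part is the contradiction step: one must translate the halfspace inequality delivered by \aref{Lemma}{l:vectorexcludingcones} into a concrete instance of the parametrised hypothesis, verify that the resulting $\lambda$ lies in $[-\norm{\Lambda},\norm{\Lambda}]$, and check that after rescaling by $c=|\vec{p}_2|$ the $(4KN+2)\norm{\Lambda}^4$ lower bound strictly dominates the drift upper bound of order $KN\norm{\Lambda}^2\cdot\norm{\vec{p}}$. The exponent $\norm{\Lambda}^4$ in the theorem's second hypothesis (rather than $\norm{\Lambda}^3$) is exactly what leaves the needed slack for this rescaling to succeed.
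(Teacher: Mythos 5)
Your proof is correct and follows essentially the same route as the paper: assume $\tuple{0,1}$ is outside the cone, invoke \aref{Lemma}{l:vectorexcludingcones} to get a separating $\vec{p}$, contradict \aref{Lemma}{l:drift}(ii) using the parametrised hypothesis, then conclude with \aref{Lemma}{l:cut}. The only (harmless) difference is that you rescale $\vec{p}$ by $|\vec{p}_2|$ to match the hypothesis, whereas the paper first observes $(\vec{t}-\vec{s})_2>0$ and bounds $-\vec{p}\cdot(\vec{t}-\vec{s})\ge\tuple{-\vec{p}_1,1}\cdot(\vec{t}-\vec{s})$; both yield the same contradiction.
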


\begin{proof}
Note that  $(\vec{t} - \vec{s})_2>0$, since otherwise either $\tuple{1, 1} \cdot (\vec{t} - \vec{s})$ or $\tuple{-1, 1} \cdot (\vec{t} - \vec{s})$ would be non-positive contradicting
the assumption of the theorem.

Let $\vec{C} = \Cycles{2 ||\Lambda||^2 N}{\Lambda}{ \pi}$.
Assume for contradiction that $\tuple{0, 1}$ is not in the cone of $\vec{C}\setminus \{\vec{0}\}$.
Then, due to \aref{Lemma}{l:vectorexcludingcones}, there exists $\vec{p} \in \mathbb{Z}^2$ such that
$||\vec{p}|| \leq ||\Lambda||$,
$\vec{p} \cdot \tuple{0, 1} < 0$, and
$\vec{p} \cdot \vec{a} \geq 0$ for all $\vec{a} \in \vec{C}$.
This implies $\vec{p}_2 < 0$ and therefore
\begin{align*}
-\vec{p} \cdot (\vec{t} - \vec{s}) \ge \tuple{-\vec{p}_1,1} \cdot (\vec{t} - \vec{s}) >  (4 K N + 2) ||\Lambda||^4.
\end{align*}
But, by \aref{Lemma}{l:drift} (ii),
\begin{align*}
\vec{p} \cdot (\vec{t} - \vec{s})
& \geq - (K 2 \norm{\Lambda}^2N + 1) (2 \norm{\Lambda}^2) \\
& \geq - (4 K N + 2) \norm{\Lambda}^4.
\end{align*}

Therefore, $\tuple{0, 1}$ must be in the cone of $\vec{C}$,
and we conclude by \aref{Lemma}{l:cut}.
\end{proof}

\begin{figure}[t]
  \includegraphics[width=0.5\linewidth]{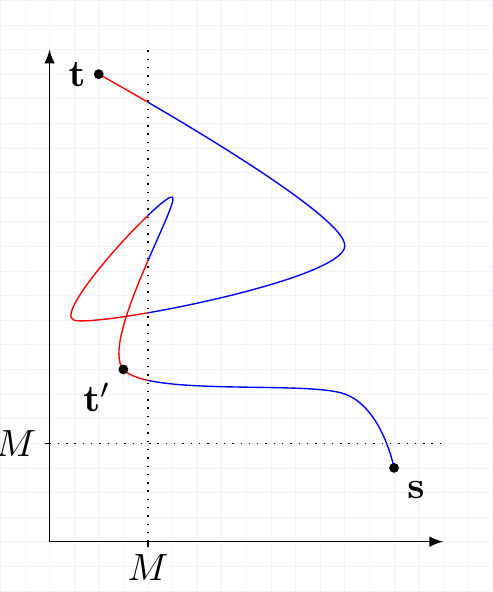}
  \includegraphics[width=0.5\linewidth]{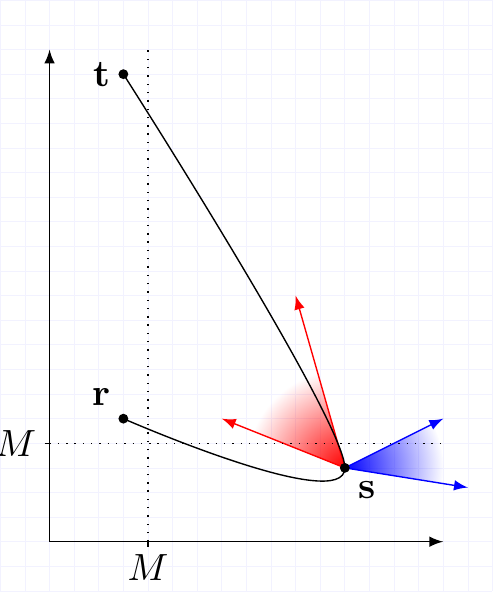}
  \caption{
  \aref{Theorem}{th:away.other} (left):
  Either the cone of cycles in the segment from $\vec{t'}$ to $\vec{t}$ contains $\tuple{0,1}$
  or that from $\vec{s}$ to $\vec{t'}$ contains some short vector in the top-left
  quadrant.
  \aref{Theorem}{th:one.visit} (right):
  The cone of cycles from $\vec{r}$ to $\vec{s}$ (blue), combined with
  the cone of cycles from $\vec{s}$ to $\vec{t}$ (red), contains $\tuple{0,1}$.
  }
  \label{fig:pics2}
\end{figure}
 
In the next theorem we combine the previous two results to handle the case when a path starts close to the $x$-axis, ends close to the $y$-axis but far away from the $x$-axis and does not come close the the $x$-axis anywhere in between,
cf.\ \aref{Figure}{fig:pics2} on the left.

\begin{theorem}
\label{th:away.other}
Suppose $N \in \mathbb{N}$, 
$M \geq 6 \norm{\Lambda}^3 N$, and 
a $2$-\SLPS\ $\Lambda$ with $K > 0$ cycles has
a path $\pi$ from point $\vec{s}$ to point $\vec{t}$ such that
\begin{itemize}
\item
$\vec{s}_1 \geq 0$,
$\vec{s}_2 < M$,
\item
$\vec{t}_1 < M$,
$\vec{t}_2 \,\geq\, 12 (K+1)(M+1) ||\Lambda||^4$ and
\item
all points visited by $\pi$ after $\vec{s}$ are in
$\mathbb{N} \times \mathbb{N}_{{\geq} M}$.
\end{itemize}
Let $\pi'$ be the shortest nonempty prefix of $\pi$ 
whose target point $\vec{t'}$ satisfies $\vec{t'}_1 < M$.
Provided $|\pi'| \geq 2$, 
let $\pi^\dag$ be $\pi'$ without its first and last vectors,
let $\Lambda^\dag$ be an \SLPS\ one of whose paths is $\pi^\dag$ 
and whose length and norm are at most those of $\Lambda$, and
let $\vec{C} = \Cycles{2 ||\Lambda||^2 N}{\Lambda^\dag}{\pi^\dag}$.
\begin{description}
\item[(i)]
If either $(\vec{t} - \vec{t'})_2 \,>\, 6(K+1)(M+1) ||\Lambda||^4$
or $\tuple{0, 1}$ is in the cone of $\vec{C}$,
then there exists $\gamma \in \{1, \ldots, 2 ||\Lambda||^2\}$ such that,
for all $n \in \{1, \ldots, N\}$,
$\pi$ has a shortening by $\tuple{0, n \gamma}$
which is admissible from~$\vec{s}$.
\item[(ii)]
Otherwise, there exists 
$\vec{v} \,\in\, \vec{C} \cap (\mathbb{Z}_{{<} 0} \times \mathbb{Z}_{{>} 0})$ 
such that
\[
  \lvec{v} \cdot \tuple{\vec{s}_1, -\vec{t}_2}\,<\,7(K+2)(M+1) ||\Lambda||^5.
\]
\end{description}
\end{theorem}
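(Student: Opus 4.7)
The plan is a case analysis matching the two asserted alternatives. Throughout I rely on the fact that, by the minimality of $\pi'$ together with the global assumption, every point of $\pi^\dag$ visited from $\vec{s}+\pi(1)$ has both coordinates $\geq M$, hence lies in $(\mathbb{N}_{\geq 6\|\Lambda\|^3 N})^2$. When $\tuple{0,1}\in\operatorname{cone}(\vec{C})$ (one sub-case of~(i)), I would directly apply \aref{Lemma}{l:cut} to $\pi^\dag$ from $\vec{s}+\pi(1)$ with $\vec{c}=\tuple{0,1}$. The admissible shortening of $\pi^\dag$ by $\tuple{0,n\gamma}$ that it produces extends to a shortening of $\pi$ admissible from $\vec{s}$: the preserved head $\pi(1)$ and the preserved tail (the last step of $\pi'$ followed by $\pi''$) visit only points whose original second coordinates are $\geq M \geq n\gamma$, so the uniform $y$-decrease is absorbed.

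For the other sub-case of~(i), namely $(\vec{t}-\vec{t'})_2>6(K+1)(M+1)\|\Lambda\|^4$, I would first show, in the spirit of \aref{Theorem}{th:away.both}'s proof, that $\tuple{0,1}$ lies in the cone of the cycles of the suffix $\pi''$ from $\vec{t'}$ to $\vec{t}$ that are repeated at least $2\|\Lambda\|^2 N$ times: otherwise \aref{Lemma}{l:vectorexcludingcones} and \aref{Lemma}{l:drift}~(ii) would produce a separating $\vec{p}$ whose drift bound contradicts the assumed large vertical displacement of $\pi''$. Since $\pi''$ may leave the $y$-axis strip, I would then decompose $\pi''$ at its crossings of the vertical line $x=M$ and, by pigeonholing the vertical displacement across the segments, dispatch either a sub-path lying in $\mathbb{N}_{<M}\times\mathbb{N}_{\geq M}$ to \aref{Theorem}{th:close.away}, or an excursion lying in $(\mathbb{N}_{\geq M})^2$ to \aref{Theorem}{th:away.both}; each produces the required $\gamma$ and the shortening, which then extends unchanged to $\pi$. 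This sub-case is where I expect the main difficulty: the cone statement for $\pi''$ alone does not yield admissibility, so care is needed in routing each portion of $\pi''$ to the correct earlier theorem.

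For case~(ii), since $\tuple{0,1}\notin\operatorname{cone}(\vec{C})$, \aref{Lemma}{l:vectorexcludingcones} supplies a vector $\vec{p}$ with $\|\vec{p}\|\leq\|\Lambda\|$, $\vec{p}_2<0$, $\vec{p}\cdot\vec{c}\geq 0$ for every $\vec{c}\in\vec{C}$, and $\rvec{p}\in\vec{C}$ whenever $\vec{p}_1<0$. My candidate is $\vec{v}:=\rvec{p}$; then $\lvec{v}=\vec{p}$, and $\vec{v}\in\mathbb{Z}_{<0}\times\mathbb{Z}_{>0}$ exactly when $\vec{p}_1<0$. To rule out $\vec{p}_1\geq 0$ I combine the case-(ii) bound $(\vec{t}-\vec{t'})_2\leq 6(K+1)(M+1)\|\Lambda\|^4$ with $\vec{t}_2\geq 12(K+1)(M+1)\|\Lambda\|^4$ and $\vec{s}_2<M$ to force $\vec{t'}_2-\vec{s}_2$ to be very large, which makes $\vec{p}_2(\vec{t'}_2-\vec{s}_2)$ too negative to be consistent with the lower bound on $\vec{p}\cdot(\vec{t'}-\vec{s})=\vec{p}\cdot\effect{\pi^\dag}+\vec{p}\cdot(\pi(1)+\pi(k))$ provided by \aref{Lemma}{l:drift}~(ii). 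Finally, rewriting the target quantity $\lvec{v}\cdot\tuple{\vec{s}_1,-\vec{t}_2}=\vec{p}_1\vec{s}_1-\vec{p}_2\vec{t}_2$ as $\vec{p}_1\vec{t}_1-\vec{p}_2\vec{s}_2-\vec{p}\cdot(\vec{t}-\vec{s})$ and controlling each term via $\vec{t}_1,\vec{s}_2<M$, $\|\vec{p}\|\leq\|\Lambda\|$, the drift lemma, and the case-(ii) bound on $(\vec{t}-\vec{t'})_2$, yields the required estimate $<7(K+2)(M+1)\|\Lambda\|^5$.
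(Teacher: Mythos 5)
Your proposal is correct and follows essentially the same route as the paper's proof: the same partition of the suffix $\pi''$ at crossings of the line $x=M$ with a pigeonhole dispatch of some segment to \aref{Theorem}{th:close.away} or \aref{Theorem}{th:away.both}, the same application of \aref{Lemma}{l:cut} to $\pi^\dag$ when $\tuple{0,1}$ is in the cone, and in case~(ii) the same choice $\vec{v}=\rvec{p}$ from \aref{Lemma}{l:vectorexcludingcones} with the sign of $\vec{p}_1$ pinned down by playing \aref{Lemma}{l:drift}~(ii) against the large vertical displacement of $\pi'$, followed by the same kind of estimate for $\lvec{v}\cdot\tuple{\vec{s}_1,-\vec{t}_2}$. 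The only deviation is your preliminary claim in sub-case~(i) that $\tuple{0,1}$ lies in the cone of the frequently repeated cycles of $\pi''$; as you note yourself, it does not yield admissibility, it is never actually used once you fall back on the segment decomposition, and it can simply be dropped.
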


\begin{proof}
If $(\vec{t} - \vec{t'})_2 \,>\, 6(K+1)(M+1) ||\Lambda||^4$,
let $\pi''$ be the rest of $\pi$ after $\pi'$, i.e., the segment of $\pi$ that starts at $\vec{t'}$ and ends at $\vec{t}$.
Then partition $\pi''$ into
segments that visit only points in $\mathbb{N}_{{<} M} \times \mathbb{N}$ and 
segments for which all intermediate points are outside that set. Call these segments $y$-axis-close and $y$-axis-far, respectively.
In the following we argue that either \aref{Theorem}{th:close.away} applies to one of the former segments, 
or \aref{Theorem}{th:away.both} applies to one of the latter segments.

Let $\ell$ be the total number of segments and, for $i\in [1,\ell-1]$, let $\vec{a}^{(i)}$ be the endpoint of the $i$-th segment and the start point of the ($i+1$)-th segment. Note that a path from an \SLPS\ with at most $K$ cycles will be split into at most $2(K+1)$ segments and therefore $\ell \le 2(K+1)$.
For convenience, define $\vec{a}^{(0)}$ to be $\vec{t'}$ and $\vec{a}^{(\ell)}$ to be~$\vec{t}$.

Each segment corresponds to a \SLPS\ that is a fragment of the original \SLPS. Let the \SLPS\ fragment of the $i$-th segment contain $K_i$ cycles. Note that each of the cycles in the original \SLPS\ can only be part of two
 different segments. Therefore, $\sum K_i \le 2K$. 

\noindent
Since
\begin{align*}
  & \sum_{i=1}^{\ell} (\vec{a}^{(i)}-\vec{a}^{(i-1)})_2 
    = (\vec{t} - \vec{t'})_2 \\
  & >6(K+1)(M+1)||\Lambda||^4 \\
  & > 2KM||\Lambda|| + 2(K+1)(M+1)||\Lambda|| + 4(K+1)||\Lambda||^4,
\end{align*}
there must be a segment $i$,
going from $\vec{a}^{(i-1)}$ to $\vec{a}^{(i)}$,
for which
 $$(\vec{a}^{(i)}-\vec{a}^{(i-1)})_2 > (K_i M+M+1)||\Lambda|| + 2||\Lambda||^4.$$
If this segment $i$ is $y$-axis-close, we observe that
$(\vec{a}^{(i)}-\vec{a}^{(i-1)})_2 > (K_i M+1)||\Lambda||$
and therefore \aref{Theorem}{th:close.away} applies to it.

If this segment $i$ is $y$-axis-far then
\begin{align*}
(\vec{a}^{(i)}-\vec{a}^{(i-1)})_2 & > (K_i 6N||\Lambda||^3+M+1)||\Lambda|| + 2||\Lambda||^4 \\ & > (K_i 4N+2)||\Lambda||^4+2||\Lambda||+M||\Lambda||,
\end{align*}
since $M\ge 6||\Lambda||^3 N$.

Now consider the point $\vec{a}^{(i-1)'}$ visited right after $\vec{a}^{(i-1)}$ and
the point $\vec{a}^{(i)'}$ visited right before $\vec{a}^{(i)}$ and consider the
path between $\vec{a}^{(i-1)}$ and $\vec{a}^{(i)}$ without the first and last
vector.
Note that $\vec{a}^{(i-1)'}_1, \vec{a}^{(i)'}_1 \in [M,  M + ||\Lambda||)$ and hence
$|(\vec{a}^{(i)'}-\vec{a}^{(i-1)'})_1| \le ||\Lambda|| < M$. Therefore, we have 
 $\tuple{ \lambda, 1}
\cdot (\vec{a}^{(i)'}-\vec{a}^{(i-1)'}) >  (K_i 4N+2)||\Lambda||^4$ for all $\lambda
\in [-||\Lambda||,||\Lambda||]$ and hence \aref{Theorem}{th:away.both} applies
to this subpath, going from $\vec{a}^{(i-1)'}$ to~$\vec{a}^{(i)'}$.

Note that the section of $\pi$ going from $\vec{s}$ to $\vec{a}^{(i-1)}$ (or
$\vec{a}^{(i-1)'}$, respectively) is still admissible after the shortening
carried out through \aref{Theorem}{th:close.away} or
\aref{Theorem}{th:away.both}. The shortened segment $i$ is also admissible due
to these theorems. The section of $\pi$ that started at $\vec{a}^{(i)}$ prior to
the shortening is also admissible since the first coordinate of the
corresponding points is not changed and the second coordinate is decreased by at
most $N2\norm{\Lambda}^2 < M$. Moreover, the second coordinate of all the points prior to the shortening was at least~$M$.

In the remainder of the proof, assume
\[ (\vec{t} - \vec{t'})_2 \,\le\, 6(K+1)(M+1) ||\Lambda||^4 \]
and consequently 
\[  \vec{t'}_2 \,\ge\, 6(K+1)(M+1) ||\Lambda||^4, \]
since $\vec{t}_2 \ge 12(K+1)(M+1) ||\Lambda||^4$.

Then $|\pi'| \geq 2$, so $\pi^\dag$, $\Lambda^\dag$ and $\vec{C}$ are well defined.
Let $\vec{s^\dag}$ be the first point visited by $\pi'$ after $\vec{s}$, and
let $\vec{t^\dag}$ be the target point of $\pi^\dag$ from~$\vec{s^\dag}$.
Observe, that
$\vec{s^\dag}_1 \ge 0$,~
$\vec{s^\dag}_2 < M+||\Lambda||$,~
$\vec{t^\dag}_1 < M + ||\Lambda||$, and
$\vec{t^\dag}_2 \ge  6(K+1)(M+1) ||\Lambda||^4-||\Lambda||$.

If $\tuple{0, 1}$ is in the cone of $\vec{C}$,
we are done by \aref{Lemma}{l:cut} applied to $\pi^\dag$ from~$\vec{s^\dag}$,
which visits only points in~$(\mathbb{N}_{{\geq} M})^2$.
Note that all points of $\pi$ after $\vec{s}$ have a second coordinate of at least $M$. Therefore, the shortening due to \aref{Lemma}{l:cut} can also be applied to $\pi$ and result in an admissible path from~$\vec{s}$.

If $\tuple{0, 1}$ is not in the cone of $\vec{C}\setminus\{\vec{0}\}$
then \aref{Lemma}{l:vectorexcludingcones} provides
a vector $\vec{v} \in \Z^2$ such that
$||\vec{v}|| \leq ||\Lambda||$,
$\lvec{v} \cdot \tuple{0, 1} < 0$,
$\lvec{v} \cdot \vec{a} \geq 0$ for all $\vec{a} \in \vec{C}$, and
such that $\vec{v}_2 > 0$ implies $\vec{v} \in \vec{C}$.
Hence, $\vec{v}_1 < 0$ and \aref{Lemma}{l:drift} (ii) gives us
\begin{equation}
    \label{eq:thm8e}
\begin{aligned}
\lvec{v} \cdot (\vec{t^\dag} - \vec{s^\dag})
& \geq - (2 K N ||\Lambda||^2 + 1) (2 ||\Lambda||^2) \\
& \geq - 2 K (2 N + 1) ||\Lambda||^4.
\end{aligned}
\end{equation}
But then $\vec{v}_2 > 0$, since the contrary would contradict
\aref{Equation}{eq:thm8e}:
\begin{align*}
\lvec{v} \cdot (\vec{t^\dag} - \vec{s^\dag})
& <    
-\vec{v}_2 (M+||\Lambda||)\\
&\quad+\vec{v}_1 ( 6(K+1)(M+1) ||\Lambda||^4 -M-2||\Lambda||) \\
& \le  ||\Lambda|| (M+||\Lambda||)\\
&\quad- ( 6(K+1)(M+1) ||\Lambda||^4 -M-2||\Lambda||) \\
& \le  (-6K(M+1) - 6(M+1) +3+2M)||\Lambda||^4 \\
& \le  -6K(M+1)||\Lambda||^4 \\
& \le  - 2 K (2 N + 1) ||\Lambda||^4,
\end{align*}
where the last step follows since $M\ge 6||\Lambda||^3 N$.
Hence $\vec{v} \in \vec{C}$.

Recalling 
$(\vec{t^\dag} - \vec{s^\dag})_1 \geq - \vec{s^\dag}_1  \geq -\vec{s}_1 - ||\Lambda||$ 
and 
\begin{align*}
(\vec{t^\dag} - \vec{s^\dag})_2
&\ge (\vec{t'} - \vec{s})_2 - 2||\Lambda||\\
&\ge \vec{t}_2-(\vec{t}-\vec{t'})_2 - M - 2||\Lambda||\\
&\ge \vec{t}_2-6(K+1)(M+1)||\Lambda||^4 - M - 2||\Lambda||\\
&\ge \vec{t}_2-6(K+2)(M+1)||\Lambda||^4,
\end{align*}
we then conclude that
\begin{align*}
    &\lvec{v} \cdot \tuple{\vec{s}_1, -\vec{t}_2} \\
&\leq \tuple{-\vec{v}_2, \vec{v}_1}
     \cdot (\vec{s^\dag} - \vec{t^\dag}
            - \tuple{||\Lambda||, 6(K+2) (M+1) ||\Lambda||^4}) \\
&<    2 K (2 N + 1) ||\Lambda||^4
     + \vec{v}_2 ||\Lambda||
     - \vec{v}_1 6(K+2) (M+1) ||\Lambda||^4\\
&\leq 7(K+2)(M+1) ||\Lambda||^5.
\qedhere
\end{align*}
\end{proof}

Roughly speaking, our final case deals with a scenario in which the path
consists of two parts. The first part goes from close to the $y$-axis to close
to the $x$-axis without being close to the $x$-axis anywhere in between. In the
second part it goes back, from close to the $x$-axis to close to the $y$-axis without being close to the $y$-axis anywhere in between. 
See \aref{Figure}{fig:pics2} on the right.

\begin{theorem}
\label{th:one.visit}
Suppose $N \in \mathbb{N}$, 
$M \geq 8 \norm{\Lambda}^4 N$, and 
a $2$-\SLPS\ $\Lambda$ with $K > 0$ cycles has
a path $\rho \pi$ consisting of one segment $\rho$ from $\vec{r}$ to $\vec{s}$ and a second segment $\pi$ from $\vec{s}$ to $\vec{t}$ such that
\begin{itemize}
\item
$\vec{r}_1 < M$,
$\vec{r}_2 \geq 0$,
\item
$\vec{s}_1 \geq 0$,
$\vec{s}_2 < M$,
\item
$\vec{t}_1 < M$,
$\vec{t}_2 \,\geq\,  19(K+2) (M+1) ||\Lambda||^6$,
$\vec{t}_2 \geq \vec{r}_2$,
\item
all points visited by $\rho$ after $\vec{r}$ are in
$\mathbb{N}_{{\geq} M} \times \mathbb{N}$ and
\item
all points visited by $\pi$ after $\vec{s}$ are in
$\mathbb{N} \times \mathbb{N}_{{\geq} M}$.
\end{itemize}
There exists $\gamma \in \{0, \ldots, 2 ||\Lambda||^3\}$ such that,
for all $n \in \{1, \ldots, N\}$,
$\rho \pi$ has a shortening by $\tuple{0, n \gamma}$
which is admissible from~$\vec{r}$.
\end{theorem}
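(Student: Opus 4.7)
Plan. Apply Theorem~\ref{th:away.other} to the ``upward'' half $\pi$ from $\vec{s}$ to $\vec{t}$; its hypotheses hold because $\vec{t}_2 \geq 19(K+2)(M+1)\|\Lambda\|^6$ well exceeds the required $12(K+1)(M+1)\|\Lambda\|^4$. Case~(i) gives a shortening of $\pi$ by $\tuple{0,n\gamma}$ admissible from $\vec{s}$; prepending the unchanged $\rho$ is the desired shortening of $\rho\pi$ from $\vec{r}$, and we are done. Otherwise case~(ii) supplies a cycle $\vec{v}_\pi = \tuple{-a,b} \in \vec{C}_\pi$ with $a,b \in \{1,\ldots,\|\Lambda\|\}$ and $a\vec{t}_2 - b\vec{s}_1 < 7(K+2)(M+1)\|\Lambda\|^5$; combined with the bound on $\vec{t}_2$, this forces $\vec{s}_1 \geq 12(K+1)(M+1)\|\Lambda\|^4$.

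This lower bound on $\vec{s}_1$ is exactly what the variant of Theorem~\ref{th:away.other} obtained by swapping the two coordinates throughout (with no new ideas needed in the proof) requires about its ``target-coordinate far from the $y$-axis''. Applied to $\rho$ from $\vec{r}$ to $\vec{s}$, case~(i') of this variant supplies, by unpacking its proof, a sub-segment $S$ of $\rho$ together with cycles $\vec{C}_S$ each repeated at least $2\|\Lambda\|^2 N$ times in $S$ whose cone contains $\tuple{1,0}$; case~(ii') instead supplies $\vec{u}_\rho = \tuple{c,-d} \in \vec{C}_\rho$ with $c,d \in \{1,\ldots,\|\Lambda\|\}$ and $d\vec{s}_1 - c\vec{r}_2 < 7(K+2)(M+1)\|\Lambda\|^5$.

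In case~(i') the identity $a\cdot\tuple{1,0} + \vec{v}_\pi = \tuple{0,b}$ already places $\tuple{0,1}$ in the cone of $\{\vec{v}_\pi\} \cup \vec{C}_S$. In case~(ii'), chaining the two inequalities and using $\vec{t}_2 \geq \vec{r}_2$ gives $(bc-ad)\vec{t}_2 > -(b+d) \cdot 7(K+2)(M+1)\|\Lambda\|^5$; combined with $\vec{t}_2 \geq 19(K+2)(M+1)\|\Lambda\|^6$, integrality forces $bc \geq ad$. If $bc > ad$ then $\tuple{0,1}$ lies in the cone of $\{\vec{v}_\pi, \vec{u}_\rho\}$; the boundary case $bc = ad$ makes the vectors antiparallel with $c\vec{v}_\pi + a\vec{u}_\rho = \tuple{0,0}$, handled by taking $\gamma = 0$. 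In the non-degenerate sub-cases, apply Lemma~\ref{l:small.zero} to $\{\tuple{0,-1}\}$ together with $\vec{v}_\pi$ and the ``east-ish'' cycles from $\rho$: its cone contains $\vec{0}$, yielding $\gamma, \lambda_2, \lambda_3 \in \{1,\ldots,2\|\Lambda\|^2\}$ and cycles $\vec{w}_2,\vec{w}_3$ with $\tuple{0,\gamma} = \lambda_2 \vec{w}_2 + \lambda_3 \vec{w}_3$. For each $n \in \{1,\ldots,N\}$, remove $n\lambda_2$ and $n\lambda_3$ copies of $\vec{w}_2,\vec{w}_3$ from $\pi^\dag$ (for $\vec{v}_\pi$) and from $S$ or $\rho^\dag$ (for the cycles in $\vec{C}_S$ or $\vec{u}_\rho$); each count is at most $2\|\Lambda\|^2 N$, matching the defining multiplicity of these cycle sets. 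The antiparallel sub-case instead uses $nc$ copies of $\vec{v}_\pi$ and $na$ copies of $\vec{u}_\rho$. Since $\gamma \leq 2\|\Lambda\|^2 \leq 2\|\Lambda\|^3$, the bound on $\gamma$ holds.

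The main obstacle is verifying admissibility of the shortened $\rho\pi$ from $\vec{r}$. Cycle removals happen only inside the sub-paths $\pi^\dag$, $\rho^\dag$, or $S$, all of which lie in regions where the relevant axis-margins are at least $M$; the cumulative shift these removals induce on the remaining tails of $\rho\pi$ is bounded by $2\|\Lambda\|^3 N$, dominated by $M \geq 8\|\Lambda\|^4 N$. This margin preserves the ``first coordinate $\geq M$'' invariant along $\rho$ after $\vec{r}$ and the ``second coordinate $\geq M$'' invariant along $\pi$ after $\vec{s}$, so every visited point of the shortened path stays in $\N^2$.
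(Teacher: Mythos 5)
Your opening moves match the paper exactly: apply \aref{Theorem}{th:away.other} to $\pi$, dispose of case~(i), extract $\vec{v}_\pi=\tuple{-a,b}$ in case~(ii), derive $\vec{s}_1\ge 12(K+1)(M+1)\norm{\Lambda}^4$, and then turn the axis-swapped version of the same theorem on $\rho$. Your case~(ii$'$) is also essentially the paper's argument in disguise: applying \aref{Lemma}{l:small.zero} to $\{\tuple{0,-1},\vec{v}_\pi,\vec{u}_\rho\}$ recovers, via Cramer, exactly the paper's cross-multiplied removal of $nc$ copies of $\vec{v}_\pi$ and $na$ copies of $\vec{u}_\rho$, and your determinant argument for $bc\ge ad$ is the paper's.

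The gap is in case~(i$'$). The paper uses the conclusion of case~(i) of the swapped theorem as a black box: it yields an \emph{admissible} shortening of $\rho$ by the purely horizontal vector $\tuple{n''\gamma',0}$ for $n''$ ranging up to $N\norm{\Lambda}$ (which is why the paper invokes the theorem with $N\norm{\Lambda}$ in place of $N$), sets $\vec{w}\eqdef\tuple{\gamma',0}$, and matches this against $\gamma' n$ removals of $\vec{v}_\pi$ in $\pi$. You instead ``unpack'' the proof to extract a cycle set $\vec{C}_S$ whose cone contains $\tuple{1,0}$ and feed it into \aref{Lemma}{l:small.zero} alongside $\tuple{0,-1}$ and $\vec{v}_\pi$. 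Two problems arise. First, the lemma gives you no control over \emph{which} decomposition of $\vec{0}$ you get, and the cycles in $\vec{C}_S$ carry no sign guarantee on their second coordinate: when the relevant segment of $\rho$ is handled by \aref{Theorem}{th:away.both}, $\vec{C}_S$ may well consist of, say, $\tuple{1,1}$ and $\tuple{1,-1}$, and the returned decomposition may delete copies of $\tuple{1,1}$ from inside $\rho$. The net effect of the $\rho$-side deletions on every later point of $\rho$ is then a strictly negative second-coordinate shift; but points of $\rho$ after the segment $S$ are only guaranteed to lie in $\N_{\ge M}\times\N$, i.e.\ their second coordinate may be $0$, so the shortened path leaves $\N^2$. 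Your closing admissibility paragraph only checks the coordinate that enjoys the margin $M$ and tacitly assumes the other coordinate is never decreased along $\rho$ --- true in the paper's scheme because there $\vec{w}_2\le 0$ always, false for a general element of $\vec{C}_S$. Second, and relatedly, what makes the paper's \aref{Lemma}{l:cut}-style deletions inside a far segment harmless to the tail of $\rho$ is that the \emph{entire balanced combination} (net effect a multiple of $\tuple{1,0}$) is removed within that segment; by splitting the combination across $\rho$ and $\pi$ you destroy exactly this cancellation on the portion of the path between the end of $S$ and the start of $\pi^\dag$. To repair case~(i$'$) you should do what the paper does: keep the $\rho$-side shortening as the packaged horizontal shortening $\tuple{an\gamma',0}$ (invoking the swapped theorem with parameter $N\norm{\Lambda}$ so that $an\le N\norm{\Lambda}$ is in range) and compensate in $\pi$ with $\gamma' n$ copies of $\vec{v}_\pi$.
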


\begin{proof}
If case~(i) of \aref{Theorem}{th:away.other} 
applies to $\pi$ from $\vec{s}$ 
then we are done immediately, 
so assume case~(ii) applies to it.

Hence, for some cycle 
$\vec{v} \,\in\, \mathbb{Z}_{{<} 0} \times \mathbb{Z}_{{>} 0}$
which occurs in $\pi$ at least $2 ||\Lambda||^2 N$ times,
we have
\[\lvec{v} \cdot \tuple{\vec{s}_1, -\vec{t}_2}  \,<\, 7(K+2)(M+1) ||\Lambda||^5.\]
This also implies $\vec{s}_1 \ge 12(K+1)(M+1)||\Lambda||^4$, since otherwise
\begin{align*} 
\lvec{v} \cdot \tuple{\vec{s}_1, -\vec{t}_2}
&> -\vec{v}_2 12(K+1)(M+1)||\Lambda||^4 -  \vec{v}_1  \vec{t}_2\\
&\ge -12(K+1)(M+1)||\Lambda||^5 + \vec{t}_2\\
&\ge 7(K+2) (M+1) ||\Lambda||^5.
\end{align*}
Consequently,
\aref{Theorem}{th:away.other} 
with $N \norm{\Lambda}$ for $N$ and with the axes swapped
applies to $\rho$ from $\vec{r}$.

Suppose that case~(ii) of \aref{Theorem}{th:away.other} holds.
That is, for some cycle 
$\vec{w} \,\in\, \mathbb{Z}_{{>} 0} \times \mathbb{Z}_{{<} 0}$
which occurs in $\rho$ at least $2 ||\Lambda||^3 N$ times,
we have
\[\tuple{-\vec{w}_1, \vec{w}_2} \cdot \tuple{\vec{r}_2, -\vec{s}_1}
  \,<\, 7(K+2) (M+1) ||\Lambda||^5.\]
We will reduce the occurrence of cycle $\vec{w}$ in $\rho$ by  $-\vec{v}_1\cdot n$ resulting in a shortening by $-\vec{v}_1\cdot n\cdot\vec{w}$.

If case~(i) of \aref{Theorem}{th:away.other} 
with $N ||\Lambda||$ for $N$ and with the axes swapped
applies to $\rho$ from $\vec{r}$, there is a value $\gamma'\in\{1,\ldots,2||\Lambda||^2\}$ such that we can
shorten $\rho$  by $-\vec{v}_1\cdot n\cdot\tuple{\gamma',0}$. For convenience, we define $\vec{w} \eqdef \tuple{\gamma',0}$ in this case.

Either way, the resulting shortened version of $\rho$ is admissible from $\vec{r}$. In both cases, the second coordinate of points cannot decrease due to the shortening (note that $\vec{w}_2\le0$). The first coordinate may decrease but by at most $||\Lambda||\cdot N ||\Lambda||  \cdot 2||\Lambda||^2 = 2N||\Lambda||^4 < M$. Therefore, the shortened version of $\rho$ is still admissible since, prior to the shortening, all points visited by $\rho$ after $\vec{r}$ have a first coordinate of at least~$M$.

Note that, while $\rho$ is still admissible after the shortening, $\rho\pi$ may not be admissible anymore. Therefore, we also need to shorten $\pi$ appropriately to counter the effect that the shortening of $\rho$ may have had on the first coordinate.
We shorten $\pi$ by reducing the number of occurrences of cycle $\vec{v}$ by $\vec{w}_1\cdot n$. We now argue that such a shortened version of $\pi$ is admissible from $\vec{s}+\vec{v}_1\cdot n\cdot\vec{w}$.

Following \aref{Theorem}{th:away.other}, $\pi$ consists of two parts: a prefix of $\pi$, $\pi'$ for which all intermediate points lie in $(\mathbb{N}_{{\geq} M})^2$, and the remaining path after $\pi'$. Note that the cycle $\vec{v}$ is part of the path $\pi'$. Therefore the target point of $\pi'$ as well as all points on the second part of $\pi$ experience an increase of their first coordinate by $-\vec{v}_1\cdot n\cdot\vec{w}_1$. Hence, after the shortening, all points on $\pi$ starting at $\vec{s}$ have a first coordinate of at least $\min\{M,  -\vec{v}_1\cdot n\cdot\vec{w}_1\}=-\vec{v}_1\cdot n\cdot\vec{w}_1$. Reducing the repetitions of the cycle $\vec{v}$ by $\vec{w}_1\cdot n$ 
can decrease the second coordinates of points on the path by no more than $\vec{w}_1\cdot n\cdot\vec{v}_2 \le 2||\Lambda||^4 N < M$ but all points visited by $\pi$ prior to the shortening lie in $\mathbb{N} \times \mathbb{N}_{{\geq} M}$. Altogether we conclude that the shortening of $\pi$ is not only admissible from $\vec{s}$, but even admissible from  $\vec{s}+\vec{v}_1\cdot n\cdot\vec{w}$.

Overall, we have a shortened version of $\rho$ going from $\vec{r}$ to $\vec{s}+\vec{v}_1\cdot n\cdot\vec{w}$ that is admissible. This is followed by a shortened version of $\pi$ going from $\vec{s}+\vec{v}_1\cdot n\cdot\vec{w}$ to $\vec{t}+\vec{v}_1\cdot n\cdot\vec{w}-\vec{w}_1\cdot n\cdot \vec{v}=\vec{t}-n\cdot\tuple{0,\vec{w}_1 \vec{v}_2-\vec{w}_2 \vec{v}_1}$ and which is admissible as well.

Since we successfully shortened $\rho\pi$ by $n\cdot\tuple{0,\vec{w}_1 \vec{v}_2-\vec{w}_2 \vec{v}_1}$ it only remains to show that $\vec{w}_1 \vec{v}_2-\vec{w}_2 \vec{v}_1 \in \{0,\ldots,2 ||\Lambda||^3\}$. Clearly, $\vec{w}_1 \vec{v}_2-\vec{w}_2 \vec{v}_1<\vec{w}_1 \vec{v}_2\le 2||\Lambda||^3$.  On the other hand, it cannot be that
$\vec{v}_1 \vec{w}_2 > \vec{v}_2 \vec{w}_1$,
because it implies
\begin{align*}
\vec{t}_2
& \leq  \tuple{\vec{v}_2 \vec{w}_1, \vec{v}_1 \vec{w}_2} \cdot
        \tuple{-\vec{r}_2,          \vec{t}_2} \\
& =  -\vec{r}_2\vec{v}_2 \vec{w}_1 + \vec{t}_2 \vec{v}_1 \vec{w}_2\\
& =       \vec{v}_2 \cdot \tuple{-\vec{w}_1, \vec{w}_2} \cdot \tuple{\vec{r}_2, -\vec{s}_1} - \vec{w}_2 \cdot  \tuple{-\vec{v}_2, \vec{v}_1} \cdot \tuple{\vec{s}_1, -\vec{t}_2} \\
& <       (\vec{v}_2-\vec{w}_2)\cdot 7(K+2) (M+1) ||\Lambda||^5 \\
& \leq 14(K+2)(M+1) ||\Lambda||^6.\qedhere
\end{align*}

\end{proof}

Our penultimate theorem states that it is not possible
for a shortest reachability witness to visit a point $\vec{f}$ 
whose norm is much larger than the norms of 
the last point close to the axes before visiting $\vec{f}$ and 
the first point close to the axes after visiting $\vec{f}$.

\begin{theorem}
\label{th:far}
Suppose a $2$-\SLPS\ $\Lambda$ with $K$ cycles and with $||\Lambda|| > 0$ has 
a path $\pi$ from point $\vec{s}$ to point $\vec{t}$ such that
\begin{itemize}
\item
all points visited by $\pi$ from $\vec{s}$ are in 
$(\mathbb{N}_{{\geq}\, 6 ||\Lambda||^3})^2$ and
\item
some point $\vec{f}$ visited by $\pi$ from $\vec{s}$ satisfies
\[
        \norm{\vec{f}} \,>\, 3   \norm{\Lambda}^2 \cdot \norm{\{\vec{s}, \vec{t}\}} +
        7.5 \norm{\Lambda}^5 K.
\]
\end{itemize}
There is a shortening of $\pi$ by $\vec{0}$ that is admissible from $\vec{s}$.
\end{theorem}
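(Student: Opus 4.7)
The plan is to set $\vec{C} \eqdef \Cycles{2\norm{\Lambda}^2}{\Lambda}{\pi}$ and case-split on whether the cone of $\vec{C}$ contains $\vec{0}$, in the spirit of how \aref{Lemma}{l:cut} is used in the preceding theorems of this section.

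If $\vec{0}$ lies in the cone of $\vec{C}$, I invoke \aref{Lemma}{l:cut} directly with $\vec{c} = \vec{0}$ and $N = 1$.  Its four hypotheses are immediate: $\norm{\Lambda} > 0$ is given, $\norm{\vec{c}} = 0 \leq \norm{\Lambda}$ trivially, the cone condition is the case assumption, and the margin condition $(\N_{{\geq}\, 6\norm{\Lambda}^3 N})^2 = (\N_{{\geq}\, 6\norm{\Lambda}^3})^2$ is precisely the theorem's first hypothesis.  The lemma then delivers $\gamma \in \{1,\ldots,2\norm{\Lambda}^2\}$ such that $\pi$ has a shortening by $1 \cdot \gamma \cdot \vec{0} = \vec{0}$ admissible from $\vec{s}$, which is exactly what the theorem asserts.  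Note that the shortening is genuinely proper, since at least one copy of some cycle in $\vec{C}$ is removed.

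If $\vec{0}$ is \emph{not} in the cone of $\vec{C}$, I aim to contradict the lower bound on $\norm{\vec{f}}$.  When $\vec{C} = \emptyset$, every one of the at most $K$ cycles is used fewer than $2\norm{\Lambda}^2$ times in $\pi$, so $\len{\pi} < 2K\norm{\Lambda}^2 + K + 1$ directly.  Otherwise \aref{Lemma}{l:halfplane} supplies $\vec{p} \in \Z^2$ with $\norm{\vec{p}} \leq 2\norm{\Lambda}$ and $\vec{p} \cdot \vec{c} > 0$ for every $\vec{c} \in \vec{C}$; feeding this into \aref{Lemma}{l:drift}(i) with $B = 2\norm{\Lambda}^2$ gives
\[
  \vec{p} \cdot (\vec{t} - \vec{s}) \,\geq\, \len{\pi} - (KB+1)(2\norm{\Lambda}\norm{\vec{p}} + 1),
\]
which, together with the componentwise estimate $\vec{p} \cdot (\vec{t}-\vec{s}) \leq 2\norm{\vec{p}}\norm{\{\vec{s},\vec{t}\}}$ (since $\vec{s}, \vec{t} \in \N^2$), yields a polynomial upper bound on $\len{\pi}$.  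Because $\vec{f}$ is visited within $\len{\pi}$ steps of $\vec{s}$, this propagates to $\norm{\vec{f}} \leq \norm{\vec{s}} + \len{\pi}\cdot\norm{\Lambda}$, which, after plugging in the constants, should strictly violate the hypothesis $\norm{\vec{f}} > 3\norm{\Lambda}^2 \norm{\{\vec{s},\vec{t}\}} + 7.5\norm{\Lambda}^5 K$.  Hence this case cannot occur and the first case must hold.

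The main obstacle is arithmetic rather than structural: tuning the Case~2 estimate on $\norm{\vec{f}}$ tightly enough to beat the specific threshold $3\norm{\Lambda}^2\norm{\{\vec{s},\vec{t}\}} + 7.5\norm{\Lambda}^5 K$.  In particular, care is needed in propagating $\norm{\vec{p}} \leq 2\norm{\Lambda}$ and $B = 2\norm{\Lambda}^2$ through the $(KB+1)(2\norm{\Lambda}\norm{\vec{p}}+1)$ factor of \aref{Lemma}{l:drift}(i) and, when passing from $\len{\pi}$ to $\norm{\vec{f}}$, exploiting that each coordinate of $\vec{s}$ and $\vec{t}$ is bounded by $\norm{\{\vec{s},\vec{t}\}}$ so that the two bounds collapse to the stated polynomial.
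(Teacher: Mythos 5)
Your overall strategy is the paper's: the same case split on whether the cone of $\vec{C} = \Cycles{2\norm{\Lambda}^2}{\Lambda}{\pi}$ contains $\vec{0}$, the same use of \aref{Lemma}{l:cut} with $N=1$ and $\vec{c}=\vec{0}$ in the affirmative case, and the same combination of \aref{Lemma}{l:halfplane} with \aref{Lemma}{l:drift}(i) to rule out the other case. However, the arithmetic you flag as ``the main obstacle'' is a genuine gap, and your version of it does not close. Your chain gives $\len{\pi} \leq 4\norm{\Lambda}\norm{\{\vec{s},\vec{t}\}} + (2K\norm{\Lambda}^2+1)(4\norm{\Lambda}^2+1)$ and then $\norm{\vec{f}} \leq \norm{\vec{s}} + \len{\pi}\cdot\norm{\Lambda}$, which yields roughly $\norm{\vec{f}} \leq (4\norm{\Lambda}^2+1)\norm{\{\vec{s},\vec{t}\}} + 15K\norm{\Lambda}^5$. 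Both coefficients exceed the theorem's threshold $3\norm{\Lambda}^2\norm{\{\vec{s},\vec{t}\}} + 7.5\norm{\Lambda}^5 K$ (try $\norm{\Lambda}=K=1$: you get $5\norm{\{\vec{s},\vec{t}\}}+15$ versus $3\norm{\{\vec{s},\vec{t}\}}+7.5$), so no contradiction is reached and the second case is not excluded.

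The missing idea is to charge the length of $\pi$ to \emph{both} legs of the excursion: the path must climb from $\vec{s}$ up to $\vec{f}$ \emph{and} come back down from $\vec{f}$ to $\vec{t}$, each leg taking at least $(\norm{\vec{f}} - \norm{\{\vec{s},\vec{t}\}})/\norm{\Lambda}$ steps, so $\len{\pi} \geq 2(\norm{\vec{f}} - \norm{\{\vec{s},\vec{t}\}})/\norm{\Lambda}$. This factor of $2$ is exactly what the constants $3$ and $7.5$ are tuned to: it turns your bound into $\norm{\vec{f}} \leq \norm{\{\vec{s},\vec{t}\}} + \len{\pi}\norm{\Lambda}/2 \leq (2\norm{\Lambda}^2+1)\norm{\{\vec{s},\vec{t}\}} + 7.5K\norm{\Lambda}^5 \leq 3\norm{\Lambda}^2\norm{\{\vec{s},\vec{t}\}} + 7.5K\norm{\Lambda}^5$, which does contradict the hypothesis. (The paper presents this contrapositively, deriving the lower bound on $\len{\pi}$ first and then showing it is incompatible with \aref{Lemma}{l:drift}(i); that is only a cosmetic difference from your organization.) With that one repair your argument is complete; everything else, including the treatment of $\vec{C}=\emptyset$ and the observation that the shortening is proper, is fine.
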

\begin{proof}
We have that
\begin{align*}
    |\pi| & \geq 2(\norm{\vec{f}} - \norm{\{\vec{s},\vec{t}\}}) / \norm{\Lambda} \\
          & >    4 \norm{\Lambda} \cdot \norm{\{\vec{s},\vec{t}\}} + 15 \norm{\Lambda}^4 K \\
      & \geq 4 \norm{\Lambda} \, \norm{\vec{t} - \vec{s}} + (K 2 \norm{\Lambda}^2 + 1) (4 \norm{\Lambda}^2 + 1).
\end{align*}
In particular, $\vec{C} = \Cycles{2 \norm{\Lambda}^2}{\Lambda}{\pi}$ cannot be empty.
Suppose the cone of $\vec{C}$ does not contain $\vec{0}$.
Then \aref{Lemma}{l:halfplane} provides a vector $\vec{p}$ with
$\norm{\vec{p}}\le 2\norm{\Lambda}$ and
$\vec{p}\cdot \vec{c} > 0$ for all $\vec{c}\in\vec{C}$.
By \aref{Lemma}{l:drift}~(i) we then get
\begin{align*}
4\norm{\Lambda}\norm{\vec{t}-\vec{s}} 
& \ge \vec{p}\cdot(\vec{t}-\vec{s}) \\
& \ge |\pi| - (K 2 \norm{\Lambda}^2+ 1) (4 \norm{\Lambda}^2 + 1),
\end{align*}
which contradicts the inequation above.
So the cone of $\vec{C}$ contains $\vec{0}$
and we finish by \aref{Lemma}{l:cut} with $N = 1$ and $\vec{c} = \vec{0}$.
\end{proof}

We are now equipped to establish that $2$-dimensional \textSLPS s
have pseudo-polynomially bounded reachability witnesses:

\begin{theorem}
\label{thm:2SLPS-short}
Suppose $\Lambda$ is a $2$-\SLPS\ with $K$ cycles.
For any shortest admissible path from $\vec{0}$ to $\vec{0}$,
the norms of all points visited are at most
$2914.5 K \norm{\Lambda}^{15}$.
\end{theorem}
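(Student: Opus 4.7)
The strategy combines \aref{Theorem}{th:far} (for points far from both axes) with two applications of \aref{Theorem}{th:one.visit} (one forward, one time-reversed, for points near an axis), glued via a hill-cutting match in the style of the one-dimensional case. Fix a shortest admissible path $\pi^\star$ from $\vec{0}$ to $\vec{0}$, and set $N = 2\norm{\Lambda}^3$ and $M = 16\norm{\Lambda}^7$, so that $M \ge 8\norm{\Lambda}^4 N$ as \aref{Theorem}{th:one.visit} requires. By symmetry it suffices to bound the $y$-coordinate of any visited point with $x$-coordinate $< M$; let $\vec{t}$ be such a point with maximal $\vec{t}_2$, and assume for contradiction $\vec{t}_2 \ge 19(K+2)(M+1)\norm{\Lambda}^6$.

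I locate four pivots: let $\vec{s}$ be the last point on $\pi^\star$ before $\vec{t}$ with $y < M$ and $\vec{r}$ the last point before $\vec{s}$ with $x < M$; define $\vec{s'}, \vec{r'}$ symmetrically after $\vec{t}$. All four exist because $\pi^\star$ begins and ends at $\vec{0}$, and $\vec{t}_2 \ge \vec{r}_2, \vec{r'}_2$ by the choice of $\vec{t}$. The ascent---$\rho$ from $\vec{r}$ to $\vec{s}$ followed by $\pi$ from $\vec{s}$ to $\vec{t}$---satisfies the hypotheses of \aref{Theorem}{th:one.visit} directly, and the descent from $\vec{t}$ to $\vec{r'}$, once reversed and negated, is a path in the reversed \SLPS\ $\tilde{\Lambda}$ (same $K$ and same norm) that satisfies those hypotheses too. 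Applying \aref{Theorem}{th:one.visit} to each produces $\gamma_1, \gamma_2 \in \{0,\ldots,2\norm{\Lambda}^3\}$ and admissible shortenings by $\tuple{0, n_1\gamma_1}$ and $\tuple{0, n_2\gamma_2}$ for every $n_i \in \{1,\ldots,N\}$. If either $\gamma_i = 0$, the shortening by $\vec{0}$ already yields a strictly shorter admissible path from $\vec{0}$ to $\vec{0}$, contradicting minimality. Otherwise the hill-cutting match $n_1 := \gamma_2$, $n_2 := \gamma_1$ gives $n_1\gamma_1 = n_2\gamma_2$ with $n_i \le N$; the shortened ascent ends at $\vec{t} - \tuple{0, n_1\gamma_1}$ and the un-reversed shortened descent starts at the same point and ends at $\vec{r'}$, so gluing with the unchanged prefix $\vec{0} \to \vec{r}$ and suffix $\vec{r'} \to \vec{0}$ of $\pi^\star$ produces a strictly shorter admissible path from $\vec{0}$ to $\vec{0}$---contradiction. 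Hence $\vec{t}_2 = O(K\norm{\Lambda}^{13})$, and symmetrically for $x$-close points.

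Finally, consider any point $\vec{f}$ visited by $\pi^\star$ with both coordinates $\ge 6\norm{\Lambda}^3$, and let $\pi'$ be the maximal sub-segment of $\pi^\star$ containing $\vec{f}$ and lying in $(\mathbb{N}_{\ge 6\norm{\Lambda}^3})^2$. Each endpoint $\vec{s}_0, \vec{t}_0$ of $\pi'$ has some coordinate in $[6\norm{\Lambda}^3, 6\norm{\Lambda}^3 + \norm{\Lambda})$ (being adjacent to a point outside this region), so it falls into the axis-close case and $\norm{\vec{s}_0}, \norm{\vec{t}_0} = O(K\norm{\Lambda}^{13})$. If $\norm{\vec{f}}$ exceeded $3\norm{\Lambda}^2 \norm{\{\vec{s}_0, \vec{t}_0\}} + 7.5\norm{\Lambda}^5 K$, then \aref{Theorem}{th:far} applied to an \SLPS\ whose path is $\pi'$ (built from the fragment of $\Lambda$ used by $\pi'$, as for $\Lambda^\dag$ in \aref{Theorem}{th:away.other}) would furnish a shortening of $\pi'$ by $\vec{0}$, hence of $\pi^\star$---contradiction. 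Combining the two cases yields $\norm{\vec{f}} = O(K\norm{\Lambda}^{15})$, and careful bookkeeping of constants delivers the announced bound $2914.5 K \norm{\Lambda}^{15}$. The chief difficulty is the hill-cutting match: to glue the two shortenings into one admissible path their $y$-effects must coincide exactly, which is precisely why Theorems~\ref{th:close.away}--\ref{th:one.visit} provide a \emph{range} $\{1,\ldots,N\}\gamma$ of admissible shortenings with $\gamma \le 2\norm{\Lambda}^3$, so that $N = 2\norm{\Lambda}^3$ always admits the common multiple $\gamma_1 \gamma_2$.
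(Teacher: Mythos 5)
Your proposal matches the paper's own proof essentially step for step: the same choice of a point $\vec{t}$ with maximal $\vec{t}_2$ among near-$y$-axis points, the same backward/forward extraction of the pivots $\vec{r},\vec{s}$ and $\vec{s'},\vec{r'}$, two applications of Theorem~\ref{th:one.visit} (one to the reversed descent) glued by the match $n_1=\gamma_2$, $n_2=\gamma_1$ --- which is exactly why $N=2\norm{\Lambda}^3$ is chosen --- and a final appeal to Theorem~\ref{th:far} using the resulting bound $M'$ on the near-axis endpoints. The only cosmetic deviation is that you take the far segment maximal within $(\N_{{\geq}\,6\norm{\Lambda}^3})^2$ rather than cutting at the larger margin $M$, which changes nothing in the bound.
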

 \begin{proof}
 We can assume $K, \norm{\Lambda} > 0$.
 Consider any shortest admissible $\pi\in\Lambda$ from $\vec{0}$ to~$\vec{0}$,
 and let $M \eqdef 16 \norm{\Lambda}^7$.

 First, we show that at all points visited by $\pi$ where one coordinate
 is less than $M$,
 the other coordinate must be less than
 \begin{multline*}
 M' \eqdef 969 K \norm{\Lambda}^{13} 
         = 19 (3 K)(17\norm{\Lambda}^7)\norm{\Lambda}^{6} \\
       \ge 19 (K+2)(M+1)\norm{\Lambda}^{6}.
 \end{multline*}
 To see this, assume the contrary and let $\vec{t}\in\N^2$ be a point
 visited by $\pi$ from
 $\vec{0}$ such that, w.l.o.g., $\vec{t}_1< M$ and $\vec{t}_2\ge M'$.
 Further assume that $\vec{t}$
 is a point with maximum $\vec{t}_2$ among all points with this property.
\begin{figure}[t]
  \includegraphics[width=0.5\linewidth]{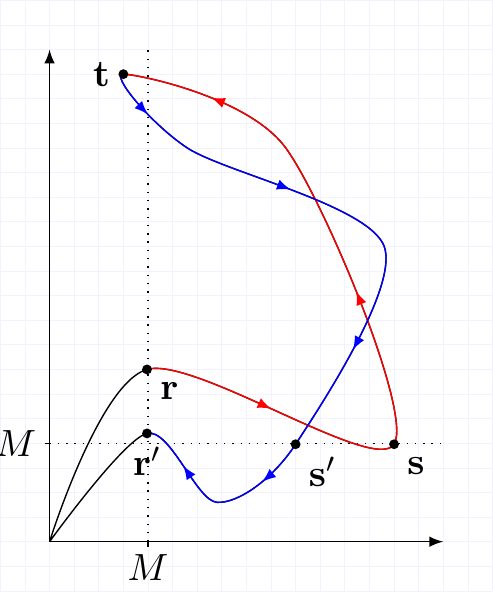}
  \includegraphics[width=0.5\linewidth]{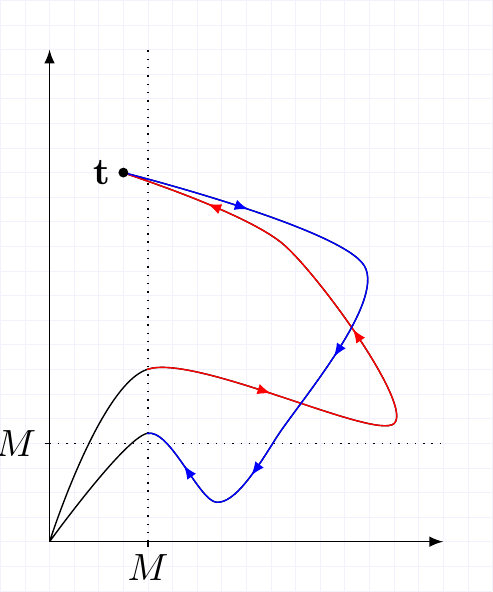}
  \caption{
  In \aref{Theorem}{thm:2SLPS-short}, we identify two path segments, the red one from $\vec{r}$ to $\vec{t}$ via $\vec{s}$
  and the blue one from $\vec{t}$ to $\vec{r'}$ via $\vec{s'}$ pictured on the left. Both can be shortened via \aref{Theorem}{th:one.visit} (where the theorem is applied to the revers of the blue path).
  Both shortenings combined result in a new, shorter path from $\vec{0}$ to $\vec{0}$. In the new path, the point corresponding to $\vec{t}$ has moved down. This is pictured on the right.
  }
  \label{fig:2SLPS-short}
\end{figure}
 Then we can extract a subpath $\rho$
 by following $\pi$ backwards, starting in $\vec{t}$ until for the first time
 a point $\vec{s}$ is visited that satisfies $\vec{s}_2< M$
 and then further, until for the first time a point $\vec{r}$
 is visited with $\vec{r}_1< M$.
 (Here it may be the case that
  $\vec{s}$ and $\vec{r}$ are the same point,
  i.e.\ the latter path segment is empty.)
 On this path \aref{Theorem}{th:one.visit} is applicable
 with $N=2 \norm{\Lambda}^3$. So there exist
 $\gamma\in\{0,\dots 2\norm{\Lambda}^3\}$ and
 shortenings by
 $\tuple{0,n\gamma}$ for all $n\in\{1,\dots N\}$, admissible from the
 point $\vec{r}$.
 If $\gamma = 0$ then this directly contradicts the minimality of $\pi$.
 Otherwise we can, analogously,
 extract a subpath $\rho'$ by following $\pi$ forwards from $\vec{t}$ to some
 $\vec{r'}$ and then reversing, so that \aref{Theorem}{th:one.visit} provides
 $\gamma'\in\{0,\dots 2\norm{\Lambda}^3\}$ and shortenings by
 $\tuple{0,n\gamma'}$ for all $n\in\{1,\dots N\}$, admissible backwards
 from $\vec{r'}$. See \aref{Figure}{fig:2SLPS-short} for an illustration.
 Together, this means there is a shortening of $\pi$ by $\vec{0}$;
 a contradiction with the minimality assumption.

 To show the claim of the theorem, assume that $\pi$ visits some
 point $\vec{f}$ whose norm exceeds
 $2914.5 K \norm{\Lambda}^{15}
  \,\geq\, 3\norm{\Lambda}^2 M' + 7.5\norm{\Lambda}^5 K$.
 Then we can partition $\vec{0}\step{\pi}\vec{0}$ as
  $\vec{0}\step{\rho}
   \vec{s}\step{\sigma}
   \vec{f}\step{\sigma'}
   \vec{t}\step{\tau}
   \vec{0}$
 where $\norm{\vec{s}}, \norm{\vec{t}} < M'$
 and all other points visited by $\sigma \sigma'$ from $\vec{s}$
 are in $(\N_{{\geq} M})^2$.
 But then \aref{Theorem}{th:far}
 provides a shortening of $\sigma \sigma'$ 
 that is admissible from $\vec{s}$,
 and thus a shortening of $\pi$ admissible from $\vec{0}$,
 again contradicting the minimality assumption.
 \end{proof}

\section{Finish: $2$-VASS}
\label{sec:application_to_2_vass}

\citet[Thm.~1]{BFGHM2015} showed that $2$-\VASS\ can be flattened,
i.e., their reachability relation can be expressed by a finite set
of polynomially bounded \textLPS s:
\begin{theorem}
    \label{thm:blondin}
    For every $2$-\VASS\ $V$ with $n$ states
    over an alphabet $\vec{A} \subseteq \Z^2$,
    there exist
    finitely many
    \LPS s 
    $
    \Lambda_1,
    \Lambda_2,
    \dots,
    \Lambda_k
    \subseteq V
    $
    such that
${\step{V}} = \bigcup_{i=1}^k {\step{\Lambda_i}}$
    and
        $|\Lambda_i| \le (\norm{\vec{A}}+n)^{O(1)}$ for all $1\le i\le k$.
\end{theorem}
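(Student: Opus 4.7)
{thm:blondin}.}
Since this result is quoted from \citet{BFGHM2015}, the plan is to follow their flattening argument, which reduces reachability witnesses in a $2$-\VASS\ to polynomially-sized \LPS\ templates. The overall goal is: for every source--target pair reached in $V$, exhibit at least one $\Lambda_i$ in our collection whose language contains a witnessing path.  Since we only need to bound the size of each individual $\Lambda_i$ (and may use infinitely many of them, covering all reachability pairs), the task becomes one of producing a polynomially-sized template for each ``shape'' of witness.

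First, I would reduce to the strongly connected case by decomposing the state graph into SCCs. Any witness in $V$ factors as an acyclic inter-SCC skeleton (of length at most $n$, hence trivially polynomial) interleaved with segments inside single SCCs. So it suffices to flatten strongly connected 2-VASS with polynomial parameters and then concatenate at most $n$ such flattenings per skeleton. Second, inside a strongly connected 2-VASS, I would use that reachability over $\Z^2$ (ignoring positivity) is a semilinear relation definable by a linear Diophantine system whose coefficients are polynomial in $n$ and $\norm{\vec{A}}$; the set of simple cycles already generates this semilinear set.

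Third, and this is where the 2D geometry is crucial, I would analyze how a $\Z^2$-witness can be realized as an admissible $\N^2$-path. The idea is that each witness can be rearranged so that its cycle multiset is achieved via a path which, roughly, first climbs out of the axes using a few ``bootstrapping'' short cycles, then freely iterates its main cycles while both coordinates are large, and finally descends back. A planar cone is spanned by at most two extremal directions (as \aref{Lemma}{l:outermost} of this paper already shows), so a two-phase LPS template of polynomial length suffices for the bulk of the cycle iterations; the edge cases near the axes are handled by a case analysis on which coordinate binds, producing a bounded list of additional LPS templates with polynomial parameters.

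The main obstacle is controlling the norm and length of the LPS templates. Naive semilinearity arguments (Hopcroft--Pansiot) give only doubly-exponential bounds; achieving polynomial bounds requires exploiting the planar structure to show that each cycle-profile can be realized with only polynomially many distinct cycle slots and polynomial iteration counts. In particular, one must argue that ``cycle loading'' at the boundary of $\N^2$ — the delicate phase where a coordinate grows from $O(\norm{\vec{A}})$ to polynomial size before main cycles can be iterated freely — requires only polynomially many short sub-cycles, mirroring the 1D hill-cutting of \citet{VP1975} lifted to the plane via arguments of the flavour developed in Sections~3--4 of the present paper.
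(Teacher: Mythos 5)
The paper does not prove \aref{Theorem}{thm:blondin} at all: it is imported verbatim as Theorem~1 of \citet{BFGHM2015} and used as a black box. So the comparison has to be between your sketch and the (long, external) proof in that paper, and there your plan is an outline of the right strategy rather than a proof. The SCC decomposition, the passage through $\Z^2$-reachability, and the geometric observation that a planar cone of cycle effects has at most two extremal generators are all genuinely present in \citet{BFGHM2015}. But the entire quantitative content of the theorem --- that each strongly connected component can be flattened by \LPS s of length polynomial in $n$ and $\norm{\vec{A}}$ --- is exactly the part you label ``the main obstacle'' and then do not carry out. Two concrete holes: first, a strongly connected $2$-\VASS\ can have exponentially many simple cycles, so ``the set of simple cycles already generates this semilinear set'' cannot be turned into a polynomial-length \LPS\ without an argument selecting polynomially many of them; second, the boundary phases (climbing away from, and descending back to, the axes) are where \citet{BFGHM2015} spend most of their case analysis, distinguishing whether the cycle cone is all of $\Z^2$, a halfplane, a line, and so on, and your sketch only asserts that ``a case analysis on which coordinate binds'' handles this.

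A smaller but real slip: you say you ``may use infinitely many'' \LPS s, whereas the statement requires finitely many. This is recoverable --- once every $\Lambda_i$ is a sublanguage of $V$ of length $(\norm{\vec{A}}+n)^{O(1)}$ over the finite alphabet $\vec{A}$, only finitely many such expressions exist --- but as written your plan targets a weaker statement than the one quoted. Given that the present paper treats this theorem purely as a citation, the appropriate move is either to cite \citet{BFGHM2015} as the paper does, or to commit to reproducing their flattening argument in full; the intermediate sketch neither matches the paper's treatment nor constitutes an independent proof.
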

Small witness theorems for $2$-dimensional \LPS s therefore carry over to $2$-\VASS.
To apply our small witness theorem for \emph{simple} \LPS s
a further reduction (\aref{Theorem}{thm:simple-LPS} below) is necessary.
We will use the following fact.

\begin{lemma}
    \label{lem:vass-loops}
    Suppose $\sys{A}\subseteqfin\Z^d$, $\pi\in\sys{A}^*$,
    $m\in\N$ and $\vec{s}\in\N^d$.
    Then
    $\pi^{m+2}$ is admissible from $\vec{s}$ if and only if
    $\pi(\effect{\pi})^m\pi$ is
    admissible from $\vec{s}$.
\end{lemma}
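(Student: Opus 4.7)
Set $\vec{e}\eqdef\effect{\pi}$ and let $P\subseteq\Z^d$ denote the set of prefix sums of $\pi$, i.e.\ the points visited by $\pi$ from $\vec{0}$. Then the points visited by $\pi^{m+2}$ from $\vec{s}$ are precisely those of the form $\vec{s}+k\vec{e}+\vec{p}$ with $k\in\{0,\ldots,m+1\}$ and $\vec{p}\in P$, while $\pi(\vec{e})^m\pi$ visits only the subset where $k\in\{0,m+1\}$, together with the intermediate ``summary'' points $\vec{s}+k\vec{e}$ for $k=1,\ldots,m+1$ (each of which also appears in the first family, taking $\vec{p}=\vec{0}\in P$). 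So both paths visit subsets of the same list of points, and admissibility in each direction just says that the appropriate points lie in $\N^d$.

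The forward direction is then immediate: every point visited by $\pi(\vec{e})^m\pi$ from $\vec{s}$ is already visited by $\pi^{m+2}$ from $\vec{s}$, so if the latter stays in $\N^d$ then so does the former.

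For the backward direction, suppose $\pi(\vec{e})^m\pi$ is admissible from $\vec{s}$. The two outer $\pi$-blocks give
\[
\vec{s}+\vec{p}\ge\vec{0}\quad\text{and}\quad \vec{s}+(m+1)\vec{e}+\vec{p}\ge\vec{0}
\qquad\text{for every }\vec{p}\in P.
\]
It remains to check $\vec{s}+k\vec{e}+\vec{p}\ge\vec{0}$ for every $\vec{p}\in P$ and every $k$ with $1\le k\le m$. I argue coordinate-wise: for each coordinate $i$, the map $k\mapsto(\vec{s}+k\vec{e}+\vec{p})_i$ is affine in $k$, so its minimum on the interval $\{0,\ldots,m+1\}$ is attained at $k=0$ when $\vec{e}_i\ge 0$ and at $k=m+1$ when $\vec{e}_i<0$. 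In either case the minimum is non-negative by the two displayed inequalities, whence the value at any intermediate $k$ is also non-negative.

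The key (and essentially only) observation is the monotonicity of $k\mapsto\vec{s}+k\vec{e}+\vec{p}$ in each coordinate, which lets the two outer copies of $\pi$ act as witnesses bracketing all $m+2$ iterations; there is no genuine obstacle beyond this.
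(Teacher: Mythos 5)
Your proof is correct and follows essentially the same route as the paper's: the paper packages the prefix-sum conditions into the unique minimal point $\vec{g}$ from which $\pi$ is admissible and then argues coordinate-wise by the sign of $(\effect{\pi})_j$, which is exactly your observation that $k\mapsto(\vec{s}+k\,\effect{\pi}+\vec{p})_i$ is affine and hence minimized at an endpoint of $\{0,\ldots,m+1\}$.
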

\begin{proof}
    The `only if' direction is immediate; for the other direction
    observe that if 
    $\pi(\effect{\pi})^m\pi$ is admissible from $\vec{s}$,
    then there is $\vec{t}\in\N^2$ such that
    $\vec{s}\step{\pi(\effect{\pi})^m}\vec{t}$
    and $\pi$ is admissible both from $\vec{s}$ and $\vec{t}$.
    Let $\vec{g}\in\N^d$ be minimal such that $\pi$ is admissible from it.
    We show that $\pi$ is admissible from all points
    $\vec{s}+\effect{\pi}\cdot i$
    for $0\le i \le m$.
    Suppose this fails for some $i$ and $\vec{v} \eqdef
    (\vec{s}+\effect{\pi}\cdot i) \not\ge \vec{g}$.
    Then $\vec{v}_j<\vec{g}_j$ for
    some dimension $1\le j\le d$.
    Since $\vec{s}\ge\vec{g}$,
    it must hold that $(\effect{\pi})_j<0$
    and because $\vec{t} = \vec{s}+\effect{\pi}\cdot(m+1)$,
    also $\vec{t}_j<\vec{g}_j$
    and consequently, $\vec{t}\not\ge\vec{g}$. Contradiction.
\end{proof}

\begin{theorem}
    \label{thm:simple-LPS}
    For every \LPS\ $L$
    there are
    finitely many \SLPS s 
    $
    \Lambda_1,
    \Lambda_2,
    \dots,
    \Lambda_k
    $
    such that
        ${\step{L}} \,=\, \bigcup_{i=1}^k {\step{\Lambda_i}}$
    and
    \begin{enumerate}
        \item For all $i\le k$,
        $|\Lambda_i| \le 4|L|$ and
        $\norm{\Lambda_i} \le 2\norm{L}\cdot|L|$.
        \item For every path $\pi \in \Lambda_i$ there exists
            $\pi'\in L$ with $|\pi'|\le|\pi|\cdot |L|$
                and ${\step{\pi}} \subseteq {\step{\pi'}}$.
        \end{enumerate}
\end{theorem}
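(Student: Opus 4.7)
The plan is to branch on how each of the $K$ cycles of $L$ is used: zero times, exactly once, or at least twice. This yields $3^K$ choices $\vec{c} \in \{0,1,2\}^K$, and for each I construct one simple LPS $\Lambda_{\vec{c}}$. Setting $e_i \eqdef \effect{\beta_i}$ (a single vector), I obtain $\Lambda_{\vec{c}}$ by replacing every $\beta_i^*$ in the regular expression for $L$ with the subexpression
\[
X_i \,=\, \begin{cases} \epsilon & \text{if } c_i = 0, \\ \beta_i & \text{if } c_i = 1, \\ \beta_i \, e_i^* \, \beta_i & \text{if } c_i = 2. \end{cases}
\]
Only the $e_i^*$ blocks are Kleene-starred, and each $e_i$ is a single vector, so $\Lambda_{\vec{c}}$ is genuinely simple.

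The correctness of the decomposition $\step{L} = \bigcup_{\vec{c}} \step{\Lambda_{\vec{c}}}$ rests on \aref{Lemma}{lem:vass-loops}: the path $\beta_i^{m+2}$ is admissible from a point precisely when $\beta_i \, e_i^m \, \beta_i$ is, and both produce the same effect $(m+2) e_i$. Thus any path of $L$ iterating $\beta_i$ some $n_i$ times corresponds to a path of $\Lambda_{\vec{c}}$ with $c_i$ reflecting $n_i \in \{0, 1, \geq 2\}$ and $e_i$ iterated $\max(0,n_i - 2)$ times. Conversely, for condition~(2), any admissible path $\pi$ of $\Lambda_{\vec{c}}$ iterating $e_i$ some $m_i \ge 0$ times lifts to an admissible path $\pi'$ of $L$ iterating $\beta_i$ exactly $m_i + 2$ times whenever $c_i = 2$. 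By the lemma $\step{\pi} = \step{\pi'}$, and since only the cycle letters $e_i$ of $\pi$ need to be expanded---each into $\beta_i$, a word of length $\le |L|$---while every other letter is copied verbatim, we get $|\pi'| \le |L| \cdot |\pi|$.

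For the quantitative bounds, we may assume without loss of generality that every $\beta_i$ in $L$ is non-empty (else $\beta_i^*$ is trivial and can be dropped from $L$), so $K \le \sum_i |\beta_i| \le |L|$. Counting letters gives
\[
|\Lambda_{\vec{c}}| \,\le\, |\alpha_0| + \sum_i |\alpha_i| + 2\sum_i |\beta_i| + K \,\le\, 2|L| + K \,\le\, 3|L|,
\]
well within the claimed $4|L|$. For the norm, each $e_i$ is a sum of $|\beta_i| \le |L|$ vectors of norm $\le \|L\|$, so $\|e_i\| \le |L| \cdot \|L\|$, and combined with the original letters of norm $\le \|L\|$ we obtain $\|\Lambda_{\vec{c}}\| \le |L| \cdot \|L\| \le 2|L| \cdot \|L\|$. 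I expect no deeper obstacle; the one substantive step---collapsing the middle iterations of each cycle word $\beta_i$ into iterations of the single effect vector $e_i$ while retaining one copy of $\beta_i$ at each end to preserve boundary admissibility---is already supplied by \aref{Lemma}{lem:vass-loops}.
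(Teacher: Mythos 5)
Your construction is essentially the paper's: the same $3^K$-fold case split on whether each cycle is used $0$, $1$, or at least $2$ times, the same replacement of $\beta_i^*$ by $\beta_i(\effect{\beta_i})^*\beta_i$ in the last case, and the same appeal to \aref{Lemma}{lem:vass-loops} to justify both ${\step{L}}=\bigcup_i{\step{\Lambda_i}}$ and property~(2). However, there is one genuine gap: your assertion that $\Lambda_{\vec{c}}$ is ``genuinely simple'' is false under the paper's definition. An \SLPS\ requires \emph{all} of the $\alpha_i$ \emph{and} the $\beta_i$ to be single vectors from $\vec{A}$, not merely the starred blocks. In your $\Lambda_{\vec{c}}$ the unstarred segments are concatenations such as $\alpha_0\beta_1$ or $\beta_1\alpha_1\beta_2$, of length up to $\Theta(|L|)$; so what you have built is an \LPS\ whose cycles are single letters, not an \SLPS. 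This is not a pedantic point: the quantitative lemmas that this theorem feeds into (e.g.\ the bound $|\pi|-\ell\le KB+1$ in the proof of \aref{Lemma}{l:drift}, and hence \aref{Theorem}{thm:2SLPS-short}) rely on the fact that in an \SLPS\ every non-cycle position contributes exactly one letter.

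The repair is exactly the paper's final step, which you omit: insert trivial cycles $\vec{0}^*$ between consecutive letters of each unstarred segment (and a padding letter $\vec{0}$ wherever a segment would otherwise be empty), so that every $\alpha_i$ becomes a single vector. This leaves the reachability relation and the norm unchanged, and property~(2) survives since the extra $\vec{0}$ repetitions in a path of $\Lambda_i$ are simply dropped when lifting to $L$. It does, however, cost length: the unstarred part of $\Lambda_{\vec{c}}$ has total length at most $2|L|$, and the number of inserted separators is bounded by that, so the final length is at most about $2\cdot 2|L|$ minus the number of segments, i.e.\ still at most $4|L|$ --- which is precisely why the theorem states $4|L|$ rather than the $3|L|$ you derive. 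With that step added, your argument is complete and coincides with the paper's proof.
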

\begin{proof}
    The idea is first to split $L$ into a finite set $S$ of \LPS\
    such that each of them predetermines, for each 
    cycle, if it can be used zero, one or more than one times.
    Clearly, $\bigcup S = L$
    and the maximum length of any resulting \LPS\ is $3\len{L}$.
    In each such \LPS\ $\Lambda$ we then replace occurrences
    of subexpressions $\beta_i\beta_i^*\beta_i$
    by subexpressions $\beta_i(\effect\beta_i)^*\beta_i$,
    which does not increase the length and
    can only increase the norm to $\norm{\Lambda}\le \norm{L}\cdot \len{L}$.
    By \aref{Lemma}{lem:vass-loops} this moreover does not
    change the relation $\step{\Lambda}$ and guarantees the second claimed
    property.
    It remains to introduce a total of at most $\len{L}$ many cycles $\vec{0}^*$
    into the unstarred segments to make the \LPS\ simple.
\end{proof}

\begin{theorem}
$2$-VASS have pseudo-polynomially long reachability witnesses.
\end{theorem}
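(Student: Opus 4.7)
The plan is to chain the three main reductions already in place. Given an instance $\vec{s}\step{V}\vec{t}$ with $V$ a $2$-\VASS\ with $n$ states over alphabet $\vec{A}$, \aref{Theorem}{thm:blondin} supplies an \LPS\ $\Lambda_i\subseteq V$ of length polynomial in $\norm{\vec{A}}+n$ such that $\vec{s}\step{\Lambda_i}\vec{t}$; then \aref{Theorem}{thm:simple-LPS} provides an \SLPS\ $\Lambda'$ with $\vec{s}\step{\Lambda'}\vec{t}$, $|\Lambda'|\le 4|\Lambda_i|$ and $\norm{\Lambda'}\le 2|\Lambda_i|\cdot\norm{\Lambda_i}$, both polynomial in $n$ and $\norm{\vec{A}}$.

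Because \aref{Theorem}{thm:2SLPS-short} bounds only shortest paths from $\vec{0}$ to $\vec{0}$, I would reduce to that case by padding: prepend a single $\alpha$-letter $\vec{s}$ and append a single $\alpha$-letter $-\vec{t}$, separated from $\Lambda'$ by two dummy $\vec{0}^*$ cycles so as to preserve the \SLPS\ form. Call the result $\Lambda''$. Then $\vec{0}\step{\Lambda''}\vec{0}$ iff $\vec{s}\step{\Lambda'}\vec{t}$ (any admissible $\vec{0}$-to-$\vec{0}$ path takes the dummy cycles zero times, jumps to $\vec{s}$, simulates a witness in $\Lambda'$, then subtracts $\vec{t}$), $\Lambda''$ has $K+2$ cycles if $\Lambda'$ had $K$, and $\norm{\Lambda''}=\max(\norm{\Lambda'},\norm{\vec{s}},\norm{\vec{t}})$, all polynomial in the pseudo-polynomial parameters $n$, $\norm{\vec{A}}$, $\norm{\vec{s}}$, $\norm{\vec{t}}$.

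\aref{Theorem}{thm:2SLPS-short} now bounds the norms of all points visited by a shortest admissible $\vec{0}$-to-$\vec{0}$ path in $\Lambda''$ by a polynomial $M$. This transfers to a polynomial length bound: a cycle equal to $\vec{0}$ is used zero times by minimality, while a cycle with at least one nonzero coordinate can be repeated at most $2M$ times without pushing a visited coordinate outside $[0,M]$, so the path length is at most $|\Lambda''|+(K+2)\cdot 2M$. Stripping off the two padding letters yields a polynomial-length witness of $\vec{s}\step{\Lambda'}\vec{t}$, which by the second guarantee of \aref{Theorem}{thm:simple-LPS} lifts to a path in $\Lambda_i\subseteq V$ of length at most $|\Lambda_i|$ times longer, still polynomial; and $\Lambda_i\subseteq V$ means this is a path of $V$ witnessing $\vec{s}\step{V}\vec{t}$. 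The only new bookkeeping is the padding construction and the norm-to-length inference, and neither looks like a real obstacle: the substantial content has already been carried out in \aref{Theorem}{thm:2SLPS-short}.
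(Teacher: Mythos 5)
Your proposal is correct and follows essentially the same route as the paper: chain \aref{Theorem}{thm:blondin} and \aref{Theorem}{thm:simple-LPS}, reduce to a $\vec{0}$-to-$\vec{0}$ instance by padding with $\vec{s}$ and $-\vec{t}$, and then invoke \aref{Theorem}{thm:2SLPS-short}. The only cosmetic differences are that the paper performs the padding at the \VASS\ level (forming $(\vec{s})\,V\,(-\vec{t})$ \emph{before} flattening) rather than on the \SLPS\ as you do, and that your explicit inference from the norm bound on visited points to a length bound on the path is a detail the paper leaves implicit.
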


\begin{proof}
Suppose $V$ is a $2$-\VASS\ with $n$ states 
over an alphabet $\vec{A} \subseteq \Z^2$ and
$\vec{s}, \vec{t} \in \N^2$ are such that
$\vec{s} \step{\pi} \vec{t}$ for some path $\pi \in V$.

First note that a $2$-\VASS\ $V' \eqdef (\vec{s})\,V\,(-\vec{t})$,
obtained from $V$ by adding two states,
has an admissible path $\pi'=(\vec{s})\pi(-\vec{t})$ from $\vec{0}$ to~$\vec{0}$.
By Theorems \ref{thm:blondin} and \ref{thm:simple-LPS}
there is an $2$-\SLPS\ $\Lambda$ such that:
\begin{itemize}
\item
$|\Lambda|$ and $\norm{\Lambda}$ are polynomial in 
$n$ and $\norm{\vec{A} \cup \{\vec{s}, \vec{t}\}}$;
\item
$\Lambda$ has an admissible path from $\vec{0}$ to~$\vec{0}$;
\item
for every path $\rho \in \Lambda$, there exists $\rho' \in V'$
with ${\step{\rho}} \subseteq {\step{\rho'}}$ and 
with $|\rho'|$ polynomial in 
$|\rho|$, $n$ and $\norm{\vec{A} \cup \{\vec{s}, \vec{t}\}}$.
\end{itemize}

Now, by \aref{Theorem}{thm:2SLPS-short}, we have 
$\vec{0} \step{\rho} \vec{0}$ for some path $\rho \in \Lambda$
with $|\rho|$ polynomial in $|\Lambda|$ and $\norm{\Lambda}$,
and thus polynomial in $n$ and $\norm{\vec{A} \cup \{\vec{s}, \vec{t}\}}$.
Hence there exists $\rho' \in V'$ 
such that $\vec{0} \step{\rho'} \vec{0}$
and $|\rho'|$ is polynomial in 
$n$ and $\norm{\vec{A} \cup \{\vec{s}, \vec{t}\}}$,
as required.
\end{proof}

A direct consequence is that a nondeterministic algorithm that guesses a bounded witness
on the fly requires space logarithmic in the number of states and the infinity norms of action, source and target vectors.
Recall also that already $0$-VASS are essentially directed graphs.

\begin{corollary}
The reachability problem for $2$-\VASS\ with integers given in unary is \NL-complete.
\end{corollary}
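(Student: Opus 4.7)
The plan is to verify both NL-hardness and membership in NL separately. For hardness, I would reduce directed graph reachability---well known to be NL-complete---in logarithmic space to the unary $2$-\VASS\ reachability problem. Concretely, given a directed graph $G$ with designated source and target vertices, construct a $2$-\VASS\ $V$ whose states are the vertices of $G$, whose transitions are the edges of $G$ labeled by the zero vector $\tuple{0,0}$, and query whether $\tuple{0,0}\step{V}\tuple{0,0}$ from the source to the target state. Since all actions are zero, this reachability holds iff there is a path in $G$, and the reduction is clearly logspace computable. (One may equivalently observe that $0$-\VASS\ reachability is just graph reachability and embed trivially into dimension $2$.)

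For the upper bound, I would use the preceding pseudo-polynomial witness theorem. Let $V$ be a $2$-\VASS\ with $n$ states over alphabet $\vec{A}\subseteqfin\Z^2$, and let $\vec{s},\vec{t}\in\N^2$ be the query. By the previous theorem, whenever $\vec{s}\step{V}\vec{t}$ there is a witnessing path whose length, and whose intermediate coordinate values, are bounded by some polynomial $p(n,\norm{\vec{A}\cup\{\vec{s},\vec{t}\}})$. When the input is given in unary, this bound $p$ is polynomial in the input size, so each intermediate configuration (current control state plus a pair of nonnegative integers, each bounded by $p$) can be stored in $O(\log p)=O(\log|\text{input}|)$ bits.

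The NL algorithm then guesses the witness on the fly: starting from configuration $(q_0,\vec{s})$ where $q_0$ is the source state, it maintains only the current state and current vector in $\N^2$ together with a step counter initialized to the polynomial bound. At each step it nondeterministically selects an outgoing transition $(q,\vec{a},q')$ of $V$, updates the state to $q'$, replaces the vector by $\vec{v}+\vec{a}$, rejects if any coordinate goes negative or exceeds $p$, and decrements the counter. It accepts if it ever reaches the target state with the current vector equal to $\vec{t}$ before the counter expires. Correctness follows from the witness theorem (completeness) and from the semantics of $\step{V}$ (soundness); the space usage is logarithmic in the unary input. Combined with the hardness direction, this gives NL-completeness. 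No step requires additional ideas beyond already established results, so I would not expect any real obstacle; the only care needed is to ensure that the polynomial bound supplied by the witness theorem is computed or hardcoded correctly so that the counter and the coordinate checks fit in logarithmic space.
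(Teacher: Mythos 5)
Your proposal is correct and matches the paper's own (very brief) justification: NL membership by guessing the pseudo-polynomially bounded witness on the fly in logarithmic space, and NL hardness from the observation that $0$-VASS are essentially directed graphs, so graph reachability embeds trivially. No substantive difference.
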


\section{Conclusion} 
\label{sec:conclusion}
That the covering and boundedness problems for \VASS\ given in unary
are \NL-complete for any \emph{fixed} dimension has been known for
thirty years \citep{RY86}.
This contribution suggests that, possibly, 
the same is true for the reachability problem.

If that is too challenging, 
how about restricting to flat \VASS, i.e.\ \textLPS s, 
and attempting to extend the machinery developed here to dimension $3$
in order to close the gap between \NL\ hardness and \NP\ membership 
\citep{BFGHM2015} in that case?

\acks

We are grateful to 
Stefan G\"oller,
Christoph Haase and 
J\'er\^ome Leroux
for helpful conversations.

\bibliographystyle{abbrvnat}
\renewcommand{\doi}[1]{doi: \href{http://dx.doi.org/#1}{#1}}

\bibliography{conferences,journalsabbr,references}

\end{document}